\newcommand{\doi}[1]{doi:\href{https://doi.org/#1}{%
    \urlstyle{same}\nolinkurl{#1}}}
\newcommand{\A}{\mathcal{A}}
\newcommand{\B}{\mathcal{B}}
\renewcommand{\C}{\mathcal{C}}
\newcommand{\D}{\mathcal{D}}
\newcommand{\I}{\mathcal{I}}
\newcommand{\Set}{\mathsf{Set}}
\newcommand{\Fam}{\mathsf{Fam}}
\newcommand{\id}{\mathsf{id}}
\newcommand{\Eq}{\mathsf{Eq}}
\newcommand{\Nat}{\mathbb{N}}
\newcommand{\toF}{\Rightarrow}
\newcommand{\node}{\mathsf{node}}
\newcommand{\leaf}{\mathsf{leaf}}
\newcommand{\Con}{\mathsf{Con}}
\newcommand{\Ty}{\mathsf{Ty}}
\newcommand{\Tms}{\mathsf{Tms}}
\newcommand{\Tm}{\mathsf{Tm}}
\newcommand{\emptyCon}{\varepsilon}
\newcommand{\extCon}{\mathsf{ext}}
\newcommand{\baseTy}{\iota}
\newcommand{\sigmaTy}{\sigma}
\newcommand{\sigmaEqTy}{\sigma_{\mathrm{eq}}}
\newcommand{\Alg}[2]{\ensuremath{#1.#2}}
\newcommand{\elt}[2]{\ensuremath{\int^{#1} \hspace{-0.3em} {#2}}}
\newcommand{\barAlg}[1]{#1}
\newcommand{\pathOver}[1]{\;\ensuremath{=\!\!\![#1]}\;}
\newcommand{\ap}[2]{\ensuremath{\mathsf{ap}\;#1\;#2}}
\newcommand{\ct}{%
  \mathchoice{\mathbin{\raisebox{0.5ex}{$\displaystyle\centerdot$}}}%
             {\mathbin{\raisebox{0.5ex}{$\centerdot$}}}%
             {\mathbin{\raisebox{0.25ex}{$\scriptstyle\,\centerdot\,$}}}%
             {\mathbin{\raisebox{0.1ex}{$\scriptscriptstyle\,\centerdot\,$}}}
}
\newcommand{\unit}{\mathbf{1}}
\newcommand{\bool}{\mathbf{2}}
\newcommand{\One}{\mathbf{1}} %terminal cat
\newcommand{\defeq}{\vcentcolon\equiv}
\newcommand{\R}{\mathbb R}
\newcommand{\proofDone}{}
\newtheorem{theorem}{Theorem}[]
\newtheorem{lemma}[theorem]{Lemma}
\newtheorem{corollary}[theorem]{Corollary}
\theoremstyle{definition}
\newtheorem{definition}[theorem]{Definition}
\newtheorem{example}[theorem]{Example}
\theoremstyle{remark}
\newtheorem{remark}[theorem]{Remark}
\begin{document}

\title{Quotient Inductive-Inductive Types}

\author[Altenkirch, Capriotti, Dijkstra, Kraus, Nordvall Forsberg]{%
Thorsten Altenkirch
\and
Paolo Capriotti
\and
Gabe Dijkstra
\and
Nicolai Kraus
\and
Fredrik Nordvall Forsberg%
}

\begin{abstract}
\pdfoutput=1
Higher inductive types (HITs) in Homotopy Type Theory (HoTT) allow the
definition of datatypes which have constructors for equalities over
the defined type. HITs generalise quotient types, and allow to define
types which are not sets in the sense of HoTT (i.e.\ do not satisfy
uniqueness of equality proofs) such as spheres, suspensions and the
torus. However, there are also interesting uses of HITs to define
sets, such as the Cauchy reals, the partiality monad, and the
well-typed syntax of type theory.  In each of these examples we
define several types that depend on each other mutually, i.e.\ they are
inductive-inductive definitions. We call those HITs quotient
inductive-inductive types (QIITs).
Although there has been recent progress on a general theory of HITs,
there is not yet a theoretical foundation for the combination of
equality constructors and induction-induction, despite having many
interesting applications. In the present paper we present a first step
towards a semantic definition of QIITs. In particular, we give an
initial-algebra semantics and show that this is equivalent to the
section induction principle, which justifies the intuitively expected
elimination rules.
%\keywords{Homotopy Type Theory,
%Higher Inductive Types,
%Categorical semantics,
%Induction-Induction.}
\end{abstract}

\maketitle

\section{Introduction}\label{sec:intro}
\pdfoutput=1

This paper is about Type Theory in the sense of Martin-L\"of
\cite{martinlof1972ITT}, a theory which proof assistants such as
Coq~\cite{coq} and Lean~\cite{moura:lean} as well as programming languages such as Agda~\cite{agda}
and Idris~\cite{idris} are based on. Recently, Homotopy Type Theory
(HoTT)~\cite{Univalent2013} has been introduced inspired by homotopy
theoretic interpretations of Type Theory by Awodey and
Warren~\cite{awodeyWarren2009id} and
Voevodsky~\cite{kapulkinLumsdaine2016SSet,voevodsky2010Uni}.

One of the central concepts in Type Theory are inductive definitions,
which allow us to define inductive datatypes like the natural numbers,
lists and infinite trees just by presenting constructors which use the
inductive type in a strictly positive way. Using the propositions as
types explanation we can use the same mechanism to inductively define
predicates and relations like less than or equal, or the derivability
predicate for a logic defined by rules. Conceptually, HoTT changes what
we mean by an inductive definition, because we view a type not only as
given by its elements (points) but also by its equality types
(paths). Hence an inductive definition may not only feature
constructors for elements but also for equality. This concept of a
higher inductive type (HIT) has been used to represent the homotopical
structure of geometric objects, like circles, sphere and tori, and
gives rise to synthetic homotopy theory in HoTT~\cite{shulmanSynth}.

However, as already noted in the HoTT Book~\cite{Univalent2013},
higher inductive types have also more mundane applications, such as
the definition of the Cauchy reals in a way that avoids the use of the
axiom of choice when proving properties like Cauchy completeness of
the reals. Here instead of defining the real numbers as a quotient of
sequences of rational numbers, a HIT is used to define them as the
Cauchy completion of the rational numbers. Similarly, a definition of
the partiality monad which represents potentially diverging operations
over a given type was given using a HIT~\cite{alt-dan-kra:partiality},
again avoiding the axiom of choice when showing for example that the
construction is a monad~\cite{Chapman2015}.

An important observation is that the idea of generating points and
equalities of a type inductively is interesting, even if we do not
care about the higher equality structure of types, or if we do not
want such non-trivial higher structure, for example if we stay inside
the universe $\Set$.  To see this, let us look at an example: consider
trees branching over an arbitrary type, quotiented by arbitrary
permutations of subtrees. We first define the type $T_0(X)$ of
$X$-branching trees, given by the constructors
\begin{equation*}
\begin{alignedat}{3}
 & \leaf_0 : &\;& T_0(X) \\
 & \node_0 : && (X \to T_0(X)) \to T_0(X).
\end{alignedat}
\end{equation*}
We then form the relation $R : T_0(X) \times T_0(X) \to \Set$ that we
want to quotient by as follows: $R$ is the smallest relation such that
for any auto-equivalence on $X$ (i.e.\ any $e : X \to X$ which has
an inverse) and $f : X \to T_0(X)$, we have a proof
$p_{f,e} : R(\node_0(f), \node_0(f \circ e))$, and, secondly, for
$g,h : X \to T_0(X)$ such that $(n : X) \to R(g(n), h(n))$, we have a
proof $c_{f,g} : R(\node_0(g),\node_0(h))$. We can then form the
quotient type $T_0(X) / R$, which is the type of unlabelled trees
where each node has an $X$-indexed family of subtrees, and two trees
which agree modulo the ``order'' of its subtrees are equal.  For
$X \equiv \bool$, these are binary trees where the order of the two
subtrees of each node does not matter.

Now, morally, from a family $X \to (T_0(X) / R)$, we should be able to
construct an element of the quotient $T_0(X) / R$.  This is indeed
possible if $X$ is $\bool$ or another finite type, by applying the
induction principle of the quotient type $X$ times.  However, it seems
that, for a general type $X$, this would require the axiom of
choice~\cite{Univalent2013}, which unfortunately is not a constructive
principle~\cite{diaconescu:ac}.
But using a higher inductive type, we can give an alternative definition for the type of $A$-branching trees modulo permutation of subtrees.
\begin{example} \label{ex:trees}
 Given a type $A$, we define $T(A) : \Set$ by
\begin{equation*}
\begin{alignedat}{3}
  & \leaf : &\;& T(A) \\
  & \node : && (A \to T(A)) \to T(A) \\
  & \mathsf{mix} : && (e : A \to A) \to \mathsf{isEquivalence}(e) \to (f : A \to T(A)) \\ 
  &                && \phantom{(e : A \to A) \to \mathsf{isEquivalence}(e)}  \to \node(f) = \node(f \circ e).
\end{alignedat}
\end{equation*}
\end{example}
Note that in the above example, a set-truncation constructor is
implicitly included in the statement $T(A) : \Set$, which ensures that
$T(A)$ lives in $\Set$. The construction we were looking for is now
directly given by the constructor $\node$.  This demonstration of the
usefulness of higher inductive constructions to increase the strength
of quotients was first discussed in~Altenkirch and
Kaposi~\cite{Altenkirch2016}, where such set-truncated HITs are called
\emph{quotient inductive types} (QITs).

Another example of the use of  higher inductive types is \emph{type theory in type theory}~\cite{Altenkirch2016}, where the well-typed syntax of type theory is implemented as a higher inductive-inductive type in type theory itself.
A significantly simplified version of this will serve as a running example for us:
\begin{example}
\label{ex:tt-in-tt}
We define the syntax of a (very basic) type theory by constructing
types representing contexts and types as follows.
A set $\Con : \Set$ and a type family $\Ty : \Con \to \Set$ are simultaneously defined by giving the constructors
\begin{equation*}
\begin{alignedat}{3}
 & \emptyCon : &\;& \Con \\
 & \extCon : && (\Gamma : \Con) \to \Ty(\Gamma) \to \Con \\
 & \baseTy : && (\Gamma : \Con) \to \Ty (\Gamma) \\
 & \sigmaTy : && (\Gamma : \Con) \to (A : \Ty(\Gamma)) \to \Ty(\extCon \, \Gamma \, A) \to \Ty(\Gamma) \\
 & \sigmaEqTy : && (\Gamma : \Con) \to (A : \Ty(\Gamma)) \to (B : \Ty(\extCon \, \Gamma \, A)) \\
 &              && \phantom{(\Gamma : \Con) \to (A : \Ty(\Gamma))} \to \extCon \, (\extCon \, \Gamma \, A) \, B =_{\Con} \extCon \, \Gamma \, (\sigmaTy \, \Gamma \, A \, B).
\end{alignedat}
\end{equation*}
For simplicity, we do not consider terms. Contexts are either empty
$\emptyCon$, or an extended context $\extCon\,\Gamma\,A$ representing
the context $\Gamma$ extended by a fresh variable of type $A$. Types
are either the base type $\baseTy$ (well-typed in any context), or
$\Sigma$-types represented by $\sigmaTy\,\Gamma\,A\,B$ (well-typed in
context $\Gamma$ if $A$ is well-typed in context $\Gamma$, and $B$ is
well-typed in the extended context $\extCon\,\Gamma\,A$).  Type theory
in type theory as in~\cite{Altenkirch2016} has plenty of equality
constructors which play a role as soon as terms are introduced.  To
keep the example simple we instead use another equality, stating that
extending a context by $A$ followed by $B$ is equal to extending it by
$\Sigma\, A\, B$.  This equality is given by $\sigmaEqTy$.
Note that it is not possible to list the constructors of $\Con$ and
$\Ty$ separately: due to the mutual dependency, the $\Ty$-constructor
$\sigmaTy$ has to be given in between of the two $\Con$-constructors
$\extCon$ and $\sigmaEqTy$.
\end{example}

Despite a lot of work in the literature making use of concrete
HITs~\cite{pi1S1,EMspaces,gitHIT,cavalloThesis,BlakersMassey,brunerieThesis,RPspaces},
and despite the fact that it is usually on some intuitive level clear
for the expert how the elimination principle for such a HIT can be
derived, giving a general specification and a theoretical foundation
for HITs has turned out to be a major difficulty.  Several approaches
have been proposed, and they do indeed give a satisfactory
specification of HITs in the sense that they cover all HITs which have
been used so far (we will discuss related work in a moment).  However, to the
best of our knowledge, there is no approach which
covers \emph{higher inductive-inductive} definitions such as
Example~\ref{ex:tt-in-tt}.  In a nutshell, the purpose of the current
paper is to remedy this.  We restrict ourselves to sets, i.e.\ to
\emph{quotient inductive-inductive types} (QIITs).  This of course is
a serious restriction since it means that we cannot capture many
ordinary HITs such as the circle $\mathbb{S}^1$.  At the same time,
all higher \emph{inductive-inductive} types that we know of are indeed
sets (the Cauchy reals, the surreal numbers, the partiality monad,
type theory in type theory, permutable trees), and will be instances
of our general specification.  Our framework allows arbitrarily
complicated dependency structures.  In particular, we allow
intermixing of constructors as in Example~\ref{ex:tt-in-tt}.

\subsubsection*{Contributions}

We give a formal specification of \emph{quotient inductive-inductive
  types} with arbitrary dependency structure. This can be
viewed as the generalisation of the usual semantics of inductive types
as initial algebras of a functor to the case of quotient
inductive-inductive types.
We establish conditions on the functorial specification of QIITs that
allow us to conclude that the categories of algebras are
complete. This is important because it allows us to prove the
equivalence of initiality and section-induction, justifying the
expected elimination principles.

% Their elimination principle is characterised both categorically as the
% initial object in a category of algebras and in a way which is closer to the ``intuitive'' induction principle one would expect, and we show that these are equivalent.
% This is performed internal to a type theory itself.
% As a consequence, we cannot talk about judgmental computation rules of
% the QIITs that we specify, but all such rules that one would expect do
% hold up to equality.

\subsubsection*{Related Work}

Sojakova~\cite{sojakova2015hits} shows the correspondence between
initiality and induction (a variant of our \cref{thm:main}) for a
restricted class of HITs called $\mathsf{W}$-suspensions.  Basold,
Geuvers and van der Weide~\cite{weide} introduce a syntactic schema
for HITs without higher path constructors, and derive the elimination
rules for them.  Dybjer and Moeneclaey~\cite{dybjerfinitary} give a
syntactic schema for finitary HITs with at most paths between paths,
and give an interpretation in Hofmann and Streicher's groupoid
model~\cite{groupoidModel}. 
Finally, the work by Lumsdaine and Shulman on the semantics of higher inductive types in model categories~\cite{lumsdaine2017semantics} is somewhat similar to an external version of the approach we take in this paper.

\subsubsection*{Preliminaries}

We work in a standard Martin-L\"of style type theory and assume function extensionality.
We write $\Set$ for a type universe which contains types satisfying UIP (\emph{sets} in the terminology of HoTT), and we mostly work with types of this universe.
Univalence is not needed in our development.
When we talk of a \emph{category}, we mean a precategory in the sense of~\cite{Univalent2013} (all our categories become univalent categories if we assume the univalence axiom).
We write $\C \toF \D$ for functors and $X \to Y$ for functions between types.
Note that $\Set$ refers to both the universe and the obvious category of sets and functions, and consequently, $F : A \to \Set$ is a type family, while $F : \C \toF \Set$ is a functor.
Further, we write $\elt{\C}{F}$ for the \emph{category of elements} of $F$.
Recall that this is the category which as objects has pairs $(X,x)$ of an object $X$ in $\C$ and an element $x : FX$.
For a function $f : X \to Y$ and $z,w : X$, we write $\ap f {} : z = w \to f(z) = f(w)$ for the usual ``application of a function to paths''~\cite[Lemma~2.2.1]{Univalent2013}, ${}^{-1} : x = y \to y = x$ for ``path reversal'', and ${}\ct{} : x = y \to y = z \to x = z $ for  and ``path concatenation''~\cite[Lemmas~2.1.1 and 2.1.2]{Univalent2013}.

\section{Sorts}\label{sec:sorts}

Single inductive (and quotient inductive) sets are simply elements of
$\Set$.  Inductive families~\cite{dybjer1994indfam} indexed over some
fixed type $A$ are families $A \to \Set$.  For the inductive-inductive
definitions we are considering, the situation is more complicated,
since we allow very general dependency structures.  Our only
requirement is that there is no looping dependency, since this is
easily seen to lead to contradictions, e.g.\ we do not allow the
definition of a family $A : B \to \Set$ mutually with a family
$B : A \to \Set$ (whatever this would mean).  Concretely, we will
ensure that the collection of type formation rules (the type
signatures) is given in a valid order, and we refer to the types
involved as the \emph{sorts} of the definition.  Hence our first step
towards a specification of general QIITs is to explain what a valid
specification of the sorts is.

Sorts do not only determine the formation rules of the inductive definitions, but also the types of the eliminators.
To capture this, it is not enough to specify a type of sorts --- in order to take the shape of the elimination rules into account, we need to specify a category.

\begin{definition}[Sorts]
  \label{def:sorts}
  A specification of the \emph{sorts} of a quotient
  inductive-inductive definition of $n$ types is given by a list
 \begin{equation*}
  H_0, H_1, \ldots, H_{n-1},
 \end{equation*}
 where each $H_i$ is a functor $H_i : \C_i \toF \Set$.  Here,
 $\Set$ is the category of sets (in the sense of HoTT, i.e.\ types
 with trivial equality types) and functions, $\C_0 \defeq \One$
 is the terminal category,
 and $\C_{i+1}$ is defined as follows:
 \begin{itemize}
  \item objects are pairs $(X,P)$, where $X$ is an object in $\C_i$, and $P : H_i(X) \to \Set$ is  a family of sets;
  \item a morphism  $(f, g) : (X,P) \to (Y,Q)$ consists of a morphism ${f : X \to Y}$ in $\C_i$, and a dependent function $g : (x : H_i(X)) \to P(x) \to Q(H_i(f)\, x)$ (in $\Set$).
 \end{itemize}
 We say that $\C_n$ is the \emph{base category} for the sort signature $H_0, \ldots, H_{n-1}$.
\end{definition}

The following examples will hopefully make clear the connection
between the specification in \cref{def:sorts} and common classes of
data types.

\begin{example}[Permutable trees] %single type
  \label{ex:sorts-single-type}
  For a single inductive type such as the type of trees $T(A)$ in
  \cref{ex:trees}, the sorts are specified by a single functor
  $H_0 : \C_0 \to \Set$ which maps the single object $\star$ of $\C_0$
  to the unit type $\unit$. Objects in the base category $\C_1$ are
  thus pairs $(\star, W)$, where $W : \unit \to \Set$, and morphisms are
  given by $f : \star \to \star$ in $\One$ (necessarily the
  identity morphism), together with a dependent function
  $g : (\star : \unit) \to A(\star) \to B(\star)$.   It is
  easy to see that this category $\C_1$ is equivalent to the category
  $\Set$.
\end{example}

\begin{example}[The finite types]
 Consider the inductive family $\mathsf{Fin} : \Nat \to \Set$ of finite types.
 Again, this is a single type family, i.e.\ we are in the case $n \equiv 1$.
 We have $H_0(\star) \defeq \Nat$,
 and the base category $\C_1$ is equivalent to the category of $\Nat$-indexed families,
 where objects are families $X : \Nat \to \Set$ and morphisms $\C_1(X,Y)$ are dependent functions $f : (n : \Nat) \to X(n) \to Y(n)$.
\end{example}

\begin{example}[Contexts and types]
\label{ex:sorts-con-ty}
 Let us consider the QIIT $(\Con, \Ty)$ from Example~\ref{ex:tt-in-tt}.
 Here, we need two functors $H_0$, $H_1$, the first corresponding to $\Con$ and the second to $\Ty$.
 The first is given by $H_0(\star) \defeq \unit$ as in Example~\ref{ex:sorts-single-type}, since $\Con$ is a type on its own.
 Next, we need $H_1 : \C_1 \to \Set$.
 Applying the equivalence between $\C_1$ and $\Set$ established in \cref{ex:sorts-single-type}, we define $H_1$ to be the identity functor $H_1(A) \defeq A$, since then $\Ty : H_1(\Con) \to \Set$.
 The base category $\C_2$ is equivalent to the category
 $\Fam(\Set)$, whose objects are pairs $(A, B)$ where
 $A : \Set$ and $B : A \to \Set$, and whose morphisms $(A, B)$ to
 $(A', B')$ consist of functions $f : A \to A'$ together with dependent
 functions $g : (x:A) \to B(x) \to B'(f \, x)$.
\end{example}

\begin{example}[the Cauchy reals]
 Recall that the Cauchy reals in the HoTT book~\cite{Univalent2013} are constructed by simultaneously defining $\R : \Set$ and $\sim : \R \times \R \to \Set$ (we ignore the fact that~\cite{Univalent2013} uses $\mathcal{U}$ instead of $\Set$).
 This time the sorts $H_0, H_1$ are given by $H_0(\star) \defeq \unit$ and $H_1(A) \defeq A \times A$,
 corresponding to the fact that $\sim$ is indexed \emph{twice} over $\R$.
 The base category has (up to equivalence) pairs $(X,Y)$ with $Y : X \times X \to \Set$ as objects, and morphisms are defined accordingly.
\end{example}

\begin{example}[The full syntax of type theory]
  Altenkirch and Kaposi~\cite{Altenkirch2016} give the complete syntax of a basic type theory as a (at that point unspecified) QIIT.
 Although this construction is far too involved to be treated as an example in the rest of this paper (where we prefer to work with the simplified version of \cref{ex:tt-in-tt}), we can give the sort signature $H_0, H_1, H_2, H_3$  of this QIIT.
 Apart from contexts $\Con$ and types $\Ty$, this definition also involves context morphisms $\Tms$ and terms $\Tm$:
\begin{equation*}
\begin{alignedat}{6}
  & \Con : &\; & \Set          & \qquad\qquad & \Tms : &\; & \Con \times \Con \to \Set \\
  & \Ty :  &    & \Con \to \Set &             & \Tm :  &   & \big(\Sigma \Gamma : \Con.\Ty(\Gamma)\big) \to \Set.
\end{alignedat}
\end{equation*}
We have:
  \begin{align*}
   H_0(\star) &\defeq \unit && \mbox{$\C_1 \cong \Set$ as in \cref{ex:sorts-single-type};} \\
   H_1(A) &\defeq A && \mbox{$\C_2 \cong \Fam(\Set)$ as in \cref{ex:sorts-con-ty};} \\
   H_2(A,B) &\defeq A\times A && \mbox{$\C_3$ has objects $(A,B,C)$, where $C : A \times A \to \Set$;} \\
    H_3(A,B,C) &\defeq \Sigma\,A\,B && \mbox{$\C_4$ has objects $(A,B,C,D)$, where $D : \big(\Sigma\,A\,B\big) \to \Set$.}
  \end{align*}
\end{example}

\begin{remark}
  Although we work in type theory also in the meta-theory, we give the
  presentation informally in natural language. Formally, the
  specification of sorts and base categories of \cref{def:sorts} can
  be defined as an inductive-recursive
  definition~\cite{dybjersetzer1999finax} of the list
  $H_0, \ldots, H_n$ simultaneously with a function that turns such a
  list into a category. See Dijkstra~\cite[Section~4.3]{gabeThesis} for details.
 % It is however not necessary to assume that the meta-theory allows induction-recursion, since this can be encoded in a standard way.
 % Details can be found in the PhD thesis of Dijkstra~\cite{gabeThesis}.
\end{remark}

The main result of this section states that every base category of a
is complete, i.e.\ it has all small limits. By a small limit, we mean
a limit of a diagram $D : \I \to \C$ where the shape category $\I$ has
a set of objects and each hom-type is a set. This result will be
needed later to show that categories of QIIT algebras are complete.
Recall that $\Set$ has all small limits by a standard construction.

\begin{theorem}[Base categories are complete] \label{thm:sorts-complete}
  For any sort signature
  $H_0, \ldots, H_{n-1}$, the corresponding base category
  $\C_n$ has all small limits.
\end{theorem}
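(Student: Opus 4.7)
My plan is to proceed by induction on $n$, the length of the sort signature.

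For the base case $n = 0$, the base category is $\C_0 = \One$, which trivially has all (small) limits: the limit of any diagram is the unique object, with all morphisms being the identity.

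For the inductive step, assume $\C_i$ has all small limits; I want to show the same for $\C_{i+1}$. Given a small diagram $D : \I \to \C_{i+1}$, write $D(j) = (X_j, P_j)$ and, for $\phi : j \to k$ in $\I$, $D(\phi) = (f_\phi, g_\phi)$. I would construct the limit as follows. First, compose with the projection $\C_{i+1} \to \C_i$ sending $(X,P) \mapsto X$ to obtain a diagram $D_0 : \I \to \C_i$; by the inductive hypothesis, this has a limit $(L, (\pi_j)_{j \in \I})$ in $\C_i$. To equip $L$ with a second component, define $P : H_i(L) \to \Set$ pointwise as the small limit in $\Set$ (which exists by the standard construction and the assumption that $\I$ has a set of objects and sets of morphisms):
\begin{equation*}
  P(z) \defeq \bigl\{ (s_j)_{j \in \I} \mid s_j : P_j\bigl(H_i(\pi_j)(z)\bigr) \text{ and } g_\phi\bigl(H_i(\pi_j)(z)\bigr)(s_j) = s_k \text{ for all } \phi : j \to k \bigr\}.
\end{equation*}
Here the compatibility condition makes sense because the cone equation $f_\phi \circ \pi_j = \pi_k$ in $\C_i$ gives $H_i(f_\phi) \circ H_i(\pi_j) = H_i(\pi_k)$ by functoriality, so $g_\phi$ maps into $P_k(H_i(\pi_k)(z))$ as required. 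The projection morphisms $(\pi_j, \widetilde{g}_j) : (L, P) \to (X_j, P_j)$ in $\C_{i+1}$ are then given by $\widetilde{g}_j(z)(s) \defeq s_j$.

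That this family forms a cone over $D$ is immediate: the first components commute by the cone property of $L$ in $\C_i$, while the second components commute by the compatibility condition built into the definition of $P$. For the universal property, suppose $(Y, Q)$ is another cone with legs $(\psi_j, h_j)$. The universal property of $L$ in $\C_i$ produces a unique $f : Y \to L$ with $\pi_j \circ f = \psi_j$. To define the second component $h : (y : H_i(Y)) \to Q(y) \to P(H_i(f)(y))$, send $(y, q)$ to the tuple whose $j$-th coordinate is $h_j(y)(q)$, transported along $\ap{H_i}{(\pi_j \circ f = \psi_j)}$ so that it lands in $P_j(H_i(\pi_j)(H_i(f)(y)))$; compatibility of this tuple follows from the cone equations for $(Y,Q)$. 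Uniqueness in each component follows from the uniqueness of $f$ combined with the uniqueness of maps into the pointwise limit $P(z)$ in $\Set$.

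The main obstacle I anticipate is the careful bookkeeping of transports along equations of the form $H_i(\pi_j \circ f) = H_i(\psi_j)$ that appear when verifying both existence and uniqueness of the mediating map, since the definition of morphisms in $\C_{i+1}$ involves the action of $H_i$ on the first component. All of this is routine diagram chasing once set up correctly, but the functoriality coherences need to be applied at the right places. The construction fundamentally relies on two ingredients supplied externally: the inductive hypothesis (limits in $\C_i$) and completeness of $\Set$ under small limits.
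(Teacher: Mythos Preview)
Your proof is correct and shares the same inductive skeleton as the paper's: take the limit $L$ of the first components in $\C_i$, then build the second component as a fibrewise limit in $\Set$. The paper, however, organises the inductive step differently. Rather than working with families $P : H_i(X) \to \Set$ directly, it first passes through the equivalence between families and fibrations, replacing each object $(X, P)$ by a triple $(X, Y, h)$ with $Y : \Set$ and $h : Y \to H_i(X)$. The limit is then assembled as $(\lim_\I D_X, \tilde Y, \tilde h)$, where $\tilde h : \tilde Y \to H_i(\lim_\I D_X)$ is obtained by pulling back $\lim_\I D_h : \lim_\I D_Y \to \lim_\I (H_i \circ D_X)$ along the canonical comparison map $H_i(\lim_\I D_X) \to \lim_\I (H_i \circ D_X)$. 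Your direct construction is more elementary and avoids invoking an equivalence of categories; the paper's fibration detour, on the other hand, sidesteps precisely the transport bookkeeping you flag as the main obstacle, since in the fibration picture the morphisms are strict commuting squares in $\Set$ rather than dependent functions over a base that agrees only up to a path in $H_i$.
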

\begin{proof}
 We show that each $\C_k$ is complete by induction on $k$ for $0 \leq k \leq n$.
 Clearly, $\C_0$ is complete.
 For the step case, assume that $\C_k$ is complete.
 By definition, the category $\C_{k+1}$
 has as objects pairs $(X,P)$, where $X$ is an object of $\C_k$ and $P : H_k(X) \to \Set$.
 By the usual correspondence between families and fibrations,\footnote{The correspondence is made precise in~\cite[Thm~4.8.3]{Univalent2013}, but note that the equivalence of the two categories in consideration does not require univalence.}
 we can replace $P$ by a pair $(Y,h)$ of a type $Y : \Set$ and a function $h : Y \to H_k(X)$.
 This means $\C_{k+1}$ is equivalent to the category $\D$ with objects triples $(X,Y,h)$. Morphisms between $(X,Y,h)$ and $(X',Y',k)$
 are triples $(f,g,e)$, where $f : X \to X'$, $g : Y \to Y'$, and $e : k \circ g = F(f) \circ h$.
 
 Consider a diagram $D : \I \to \D$.
 We can split this into three components $D_X : \I \to \Set$, $D_Y : \I \to \Set$, and $D_h : (i : \I) \to D_Y(i) \to H_k(D_X(i))$.
 Consider the cospan
 \begin{equation*}
  \xymatrix{
   &  \lim_\I D_Y \ar[d]^{\lim_\I D_h}  \\
   H_k(\lim_\I D_X) \ar[r] & \lim_\I (H_k \circ D_X)
   }
 \end{equation*}
 where all limits are taken in $\Set$, and where the vertical map is the canonical one given by the universal property of the limit.
 This cospan is itself a diagram in $\Set$, guaranteeing that the pullback of $\lim_\I D_h$ exists; let's call it $\tilde h : \tilde Y \to H_k(\lim_\I D_X)$.
 It is easy to check that $(\lim_\I D_X, \tilde Y, \tilde h)$ is the limit of $D$. \proofDone
\end{proof}

% [Note: At this point, we could explain the correspondence to diagrams over inverse categories in von-Glehn style.]

\section{Algebras}
\label{sec:algebras}

Once the sorts of an inductive definition have been established, the
next step is to specify the \emph{constructors}.  In this section, we
will give a very general definition of constructor specifications, although we will mainly focus on two specific kinds: \emph{point constructors}, which can be thought of as the operations of an algebraic signature, and \emph{path constructors}, which correspond to the axioms.

Similarly to how sorts are specified one by one in \cref{sec:sorts}, we can construct suitable categories of algebras by starting with a finitely complete category $\C$, such as the one obtained from a sort signature, and iteratively adding one constructor at a time. Therefore, we describe this process by describing how to specify a constructor on a finitely complete category $\C$, and show how to extend $\C$ using this constructor specification to get a new finitely complete category $\C'$. Once all constructors have been added, we obtain the sought-after inductive type as the underlying set of an initial object of the category at the last stage, provided this initial object exists.
In the case of the inductive definition of natural numbers, this process will turn out as follows:
\begin{itemize}
  \item we start with $\Set$ as our base category (only one trivial sort, as in Example~\ref{ex:sorts-single-type});
  \item we add a point constructor for the constant corresponding to 0; the category of algebras at this stage is the category of pointed sets;
  \item we add a second point constructor for the operation corresponding to $\mathsf{suc}$; the objects of the category of algebras at this stage are sets equipped with a point and a unary operation;
  \item the set of natural numbers, together with its usual structure, can now be regarded as an initial object in the category of algebras just constructed.
\end{itemize}

\subsection{Relative Continuity and Constructor Specifications}

Roughly speaking, constructors at each stage are given by pairs of $\Set$-valued functors $F$ and $G$ on $\C$, where $G$ is continuous (i.e.\ preserves all small limits).  The intuition is that $F$ specifies the arguments of the constructor, while $G$ determines its target.  For instance, in the example of the natural numbers when specifying the constructor $\mathsf{suc} : \Nat \to \Nat$, $\C$ is the category of pointed sets, and both $F$ and $G$ are the forgetful functor to $\Set$.  The continuity condition on $G$ is needed to make sure that the corresponding category of algebras is complete.  Intuitively, this expresses the idea that a constructor should only ``construct'' elements of one of the sorts, or equalities thereof.  In particular, a constant functor is usually not a valid choice for $G$.

Unfortunately, this simple description falls short of capturing many of the examples of QIITs mentioned in \cref{sec:intro}.  The problem is that we want $G$ to be able to depend on the elements of $F$.  However, since $F$ is assumed to be an arbitrary functor, its category of elements is not necessarily complete, and so we need to refine the the notion of $G$ being continuous to this case.

\begin{definition}[Relative continuity]
\label{defn:relative-continuity}
Let $\C$ be a category, $\C_0$ a complete category, and $U : \C \toF \C_0$ a functor.  A cone over a small diagram in $\C$ is a \emph{$U$-limit cone}, or \emph{limit cone relative to $U$}, if it is mapped to a limit cone in $\C_0$ by $U$. A functor $\C \toF \Set$ is \emph{continuous relative to $U$} if it maps $U$-limit cones to limit cones in $\Set$.
\end{definition}

In particular, the functor $U$ in \cref{defn:relative-continuity} is continuous relative to itself. Furthermore, if $\C$ is complete and $U$ creates limits, then relative continuity with respect to $U$ reduces to ordinary continuity.
If $\C$ is a complete category, and $F : \C \toF \Set$ is an arbitrary functor, the category $\elt{\C}{F}$ of elements of $F$ is equipped with a forgetful functor into $\C$. We will implicitly consider relative limit cones and relative continuity with respect to this forgetful functor, unless specified otherwise.
Note that if $\C$ is complete and $F$ is continuous, then $\elt{\C}{F}$ is also complete, and relative continuity of functors on $\elt{\C}{F}$ is the same as continuity, as observed above.

We can now give a precise definition of what is needed to specify a
constructor:

\begin{definition}[Constructor specifications]
\label{defn:constructor}
A \emph{constructor specification} on a complete category $\C$ is given by:
\begin{itemize}
  \item a functor $F : \C \toF \Set$, called the \emph{argument} functor;
  \item a relatively continuous functor $G : \elt{\C}{F} \toF \Set$, called the \emph{target} functor.
\end{itemize}
\end{definition}

\begin{example}[Permutable trees]
\label{ex:constructor-trees}
The constructor $\leaf : T(A)$ from \cref{ex:trees} can be specified
by functors $F_0 : \Set \toF \Set$ and
$G_0 : \elt{\Set}{F_0} \toF \Set$, where $F_0(A) \defeq \unit$ and
$G_0(A,x) \defeq A$.  Note how $F_0$ specifies the (trivial) arguments of
$\leaf$, and $G_0$ the target. Next the constructor
$\node : (A \to T(A)) \to T(A)$ can be specified by functors
$F_1 : \Set_{\bullet} \toF \Set$ and
$G_1 : \elt{\Set_{\bullet}}{F_1} \toF \Set$, where $\Set_{\bullet}$ is
the category of pointed sets (we think of the point as the previous constructor
$\leaf$): $F_1$ and $G_1$ are defined as $F_1(X, l) \defeq A \to X$ and
$G_1(X,l,f) \defeq X$, so that
\begin{equation*}
  \node : (f : F_1(T(A),\leaf)) \to G_1(T(A), \leaf, f).
\end{equation*}
\Cref{thm:base-target-continuous} will show that $G_0$ and $G_1$ are
relatively continuous.
\end{example}

\begin{example}[Contexts and types]
\label{ex:constructor-con-ty}
The constructor $\sigmaEqTy$ of type
\begin{equation*}
(\Gamma : \Con)(A : \Ty(\Gamma))(B : \Ty(\extCon \, \Gamma \, A)) 
\to \extCon \, (\extCon \, \Gamma \, A) \, B =_{\Con} \extCon \, \Gamma \, (\sigmaTy \, \Gamma \, A \, B)
\end{equation*}
from \cref{ex:tt-in-tt} is specified in the context of the previous
constructors $\emptyCon$, $\extCon$ and $\sigmaTy$ by functors
$F : \C \toF \Set$ and $G : \elt{\C}{F} \toF \Set$, where $\C$ is the
category of algebras of the previous constructors, with
$F(C,T,\epsilon, ext, s) \defeq \Sigma \Gamma : C . \Sigma A :
T(\Gamma). T(ext \, \Gamma \, A)$, and
\begin{equation*}
  G(C,T,\epsilon, ext, s, \Gamma, A, B) \defeq ext \, (ext \, \Gamma \, A) \, B =_{C} ext \, \Gamma \, (s \, \Gamma \, A \, B).
  \end{equation*}
\Cref{lem:general-equality-relatively-continuous} will show that $G$ is relatively continuous.
\end{example}

Given a constructor specification, we can define a the corresponding
category of algebras. In \cref{thm:algebras-complete}, we will see
that the assumptions of \cref{defn:constructor} guarantee that this
category is complete.

\begin{definition}[Category of algebras]
\label{defn:algebras}
Let $(F, G)$ be a constructor specification on a complete category $\C$. The \emph{category of algebras} of $(F, G)$ is denoted $\Alg{\C}{(F,G)}$, and is defined as follows:
\begin{itemize}
  \item objects are pairs $(X, \theta)$, where $X$ is an object of $\C$, and
  $\theta: (x : F X) \to G(X, x)$
  is a dependent function (in $\Set$);
  \item morphisms $(X, \theta) \to (Y, \psi)$ are given by morphisms $f : X \to Y$ in $\C$, with the property that for all $x : FX$,
  \begin{equation*} \psi(F(f)\,x) = G(\overline{f}) (\theta\,x), \end{equation*}
  where $\overline{f} : (X, x) \to (Y, F(f)\,x)$ is the morphism in $\elt{\C}{F}$ determined by $f$.
%   \item composition and identity inherited from $\C$.
\end{itemize}
\end{definition}

We think of $\Alg{\C}{(F,G)}$ as a category of ``dependent
dialgebras''~\cite{hagino}. Note that there is an obvious forgetful
functor $\Alg{\C}{(F,G)} \to \C$.
%Completeness of $\Alg{\C}{c}$ will be proved later in this section ().

\begin{example}[Permutable trees]
\label{ex:algebras-trees}
The category of algebras for the constructor specification
$(F_1, G_1)$ for $\node$ from \cref{ex:constructor-trees} is
equivalent to the category whose objects are triples $(A, l, n)$ where
$A : \Set$, $l : A$, and $n : (X \to A) \to A$.  After specifying also
the $\mathsf{mix}$-constructor, the new category of algebras contains
as well a dependent function $p : (f : A \to T) \to (\sigma : A \cong A) \to n(f) = n(f \circ \sigma)$.
\end{example}

\begin{example}[Contexts and types]
\label{ex:algebras-con-ty}
Similarly, the category of algebras for the constructor specification
from Example~\ref{ex:constructor-con-ty} has objects tuples $(C, T, e, c, b, s, s_{\mathrm{eq}})$ where $(C, T, e, c, b, s)$ is an algebra for the previous constructors, and
\begin{equation*}
  s_{\mathrm{eq}} : (\Gamma : C) \to (A : T(\Gamma)) \to (B : T(c \, \Gamma \, A)) \to c \, (c \, \Gamma \, A) \, B =_{C} c \, \Gamma \, (s \, \Gamma \, A \, B).
\end{equation*}
\end{example}

\subsection{Point Constructors}

If $\C$ is the base category for a sort signature as in \cref{def:sorts}, we can define specific target functors $\C \toF \Set$ which are guaranteed to be relatively continuous.  Constructors having those as targets are referred to as \emph{point constructors}.
Intuitively, a point constructor is an operation that returns an element (point) of one of the sorts. The corresponding target functor is the forgetful functor that projects out the chosen sort. However, sorts can be dependent, so such a projection needs to be defined on a category of elements.

Specifically, let $\C$ be a finitely complete category, $H : \C \toF \Set$ a functor, and $\C'$ the extended base category with one more sort indexed over $H$.  Recall that the objects of $\C'$ are pairs $(X, P)$, where $X$ is an object of $\C$, and $P$ is a family of sets indexed over $H X$.  Let $V_H : \C' \toF \C$ be the forgetful functor.
We define the \emph{base target} functor corresponding to $H$ to be the functor $U_H : \elt{\C'}{(H \circ V_H)} \toF \Set$ given by
\begin{equation*}
  U_H (X, P, x) = P(x).
\end{equation*}
In other words, given an object $X$ of $\C$, a family $P$ over $H X$, and a point $x$ in the base, the functor $U_H$ returns the fibre of the family $P$ over $x$.  The action of $U_H$ on morphisms is the obvious one.

\begin{example}[Permutable trees]
\label{ex:point-constructor-trees}
In \cref{ex:constructor-trees}, the functor
$G_0 : \elt{\Set}{F_0} \toF \Set$ is the composition of the forgetful
$\elt{\Set}{F_0} \toF \Set$ with the base target functor for the only
sort, which in this case is the identity $\id : \Set \toF \Set$.
\end{example}
%
% In the example of natural numbers, we have only one sort, hence $\C = 1$, $H$ is the functor returning $1$, so $\C'$ is just $\Set$.  The corresponding target functor $U_H$ is simply the identity $\Set \to \Set$.  If we take $F$ to be the constant functor returning $1$, and $G$ to be $U_H$, we get the first constructor of the inductive definition of natural numbers.
%
Note that $U_H = \id$ in \cref{ex:point-constructor-trees} is relatively
continuous, as required by \cref{defn:constructor}.  In the rest of
this section, we will show that this is true in general.
Given a category $\C$ and a functor $F : \C \toF \Set$, it is well known that the slice category over $F$ of the functor category $\C \toF \Set$ is equivalent to the functor category $\elt{\C}{F} \toF \Set$. Given a functor $G : \C \toF \Set$ and a natural transformation $\alpha : G \to F$, we will refer to the functor $\overline G : \elt{\C}{F} \toF \Set$ corresponding to $\alpha$ as the \emph{functor of fibres} of $\alpha$. Concretely, $\overline G$ maps an object $(X, x)$, where $x : F X$, to the fibre of $\alpha_X$ over $x$.

\begin{lemma}[auxiliary, not listed in the main body]\label{lem:fibre-relative-continuity}
Let $\C$ be complete, $F,G : \C \to \Set$ functors, and $\alpha : G \to F$ a natural transformation, with functor of fibres $\overline G$. Then $\overline G$ is relatively continuous if and only if, for all small diagrams $X : \I \to \C$ and all limit cones $L \to X$ in $\C$, the following diagram
\begin{equation}\label{eq:pullback-total-space}
\begin{gathered}
\xymatrix{
  GL \ar[r] \ar[d] &
  \lim GX \ar[d] \\
  FL \ar[r] &
  \lim FX
}
\end{gathered}
\end{equation}
is a pullback.
\end{lemma}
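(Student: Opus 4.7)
The plan is to reduce both sides of the claimed equivalence to a fibrewise statement about the map $\alpha_L : GL \to FL$ and the limits of its components, using the standard description of pullbacks in $\Set$ as ``isomorphism on all fibres''.

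First I would unpack the category of elements. A diagram $D : \I \to \elt{\C}{F}$ consists of a diagram $X : \I \to \C$ together with a compatible family $x_i \in F X_i$, i.e.\ an element of $\lim F X$. A cone over $D$ with apex $(L, \ell)$ is the data of a cone $L \to X$ in $\C$ together with $\ell \in FL$ such that $F(p_i)(\ell) = x_i$ for all $i$ (where $p_i : L \to X_i$ are the cone legs). Such a cone is a $U$-limit cone exactly when $L \to X$ is a limit cone in $\C$; in that case $\ell$ can be chosen arbitrarily in $FL$, and the family $(x_i)$ is determined by $\ell$ via the canonical map $FL \to \lim FX$. Hence $U$-limit cones with underlying limit $L \to X$ are in bijection with elements $\ell \in FL$.

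Next I would describe what it means for $\overline G$ to preserve such a cone. The apex is $\overline G(L, \ell) = \alpha_L^{-1}(\ell)$, and the image diagram in $\Set$ has objects $\alpha_{X_i}^{-1}(x_i)$; relative continuity at this cone therefore asserts that the canonical map
\begin{equation*}
  \alpha_L^{-1}(\ell) \; \longrightarrow \; \lim_{i \in \I} \alpha_{X_i}^{-1}(F(p_i)(\ell))
\end{equation*}
is a bijection. On the other hand, the square \eqref{eq:pullback-total-space} is a pullback in $\Set$ iff, for every $\ell \in FL$, the induced map from the fibre of $GL \to FL$ over $\ell$ to the fibre of $\lim GX \to \lim FX$ over the image of $\ell$ is a bijection; and that latter fibre is precisely $\lim_i \alpha_{X_i}^{-1}(F(p_i)(\ell))$, since limits in $\Set$ commute with taking fibres. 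Thus the pullback condition for a fixed limit cone $L \to X$ in $\C$ is literally the conjunction, over all $\ell \in FL$, of the preservation conditions for the corresponding $U$-limit cones.

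Putting the two characterisations together yields the equivalence: relative continuity of $\overline G$ is quantification over all $U$-limit cones, which by the first step breaks into ``for each limit $L \to X$ in $\C$, for each $\ell \in FL$, $\overline G$ preserves the cone with apex $(L, \ell)$'', and by the second step this is in turn equivalent to the pullback condition for every limit cone in $\C$. I do not anticipate a hard step; the only point requiring mild care is the bookkeeping that identifies the fibre of $\lim GX \to \lim FX$ over $(x_i)_i$ with $\lim_i \alpha_{X_i}^{-1}(x_i)$, which is an instance of limits in $\Set$ commuting with pullbacks.
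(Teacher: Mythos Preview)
Your proposal is correct and follows essentially the same route as the paper's proof: both identify relative limit cones in $\elt{\C}{F}$ with pairs (limit cone $L \to X$ in $\C$, element $\ell \in FL$), and both reduce the equivalence to matching the fibre $\alpha_L^{-1}(\ell)$ with the fibre of $\lim GX \to \lim FX$ over the image of $\ell$. The only cosmetic difference is that the paper packages the fibrewise characterisation of pullbacks in $\Set$ via explicit pullback pasting (for one direction) and extensivity of $\Set$ (for the other), whereas you invoke the ``pullback iff bijection on all fibres'' criterion directly; these are two ways of saying the same thing.
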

\begin{proof}
A generic relative limit cone on $\C$ is determined by a limit cone $\pi : L \to X$, where $X : \I \to \C$ is any small diagram, and an element $z : F L$. Relative continuity of $\overline G$ is equivalent to the map
\begin{equation*}
  \overline G(L, z) \to \lim_i \overline G(X_i, F\pi_i z)
\end{equation*}
being an isomorphism for all such cones $\pi$ and elements $z$.
By the explicit description of the functor of fibres $\overline G$, and the fact that pullbacks commute with limits, we have the pullback squares
\begin{equation}\label{eq:pullback-lim-fibre}
\begin{gathered}
\xymatrix{
  \overline G(L, z) \ar[r] \ar[d]  &   GL \ar[d]   &&     \lim_i \overline G(X_i, F\pi_i z) \ar[r] \ar[d]  &  \lim G X \ar[d]   \\
  1 \ar[r]^-z                      &   F L         &&     1 \ar[r]                                         & \lim FX.
}
\end{gathered}
\end{equation}
If we assume that \eqref{eq:pullback-total-space} is a pullback, we can paste squares to get a pullback
\begin{equation*}
\xymatrix{
  \overline G(L, z) \ar[r] \ar[d] &
  \lim G X \ar[d] \\
  1 \ar[r] &
  \lim FX.
}
\end{equation*}
By uniqueness of limits, it must be that $\overline G(L, z) \cong \lim_i \overline G(X_i, F\pi_i z)$, and the fact that the isomorphism is given by the canonical map follows from a straightforward diagram chase.

Conversely, if $\overline G$ is relatively continuous, it follows from the right square in \eqref{eq:pullback-lim-fibre} that the following diagram is a pullback
\begin{equation*}
\xymatrix{
  \overline G(L, z) \ar[r] \ar[d] &
  \lim G X \ar[d] \\
  1 \ar[r] &
  \lim FX.
}
\end{equation*} 
By taking a coproduct over $FL$ and using extensivity of $\Set$, we get the pullback square \eqref{eq:pullback-total-space}, as required.
\proofDone
\end{proof}

\begin{theorem}[Base target functors are relatively continuous] \label{thm:base-target-continuous}
  Let $\C$ be a complete category, $H : \C \toF \Set$ any functor, and $\C'$ the extended base category corresponding to $H$. %as in \cref{defn:base-category-extension}.
  Then the base target functor $U_H$ is relatively continuous.
\end{theorem}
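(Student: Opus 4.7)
The plan is to recognise $U_H$ as a functor of fibres and then invoke \cref{lem:fibre-relative-continuity}. To this end, I introduce the \emph{total space} functor $T : \C' \toF \Set$ defined on objects by $T(X, P) \defeq \sum_{x : H X} P(x)$, with the obvious action on morphisms induced by the action of $H$ on the underlying $\C$-morphism and the fibrewise component of the family map. The first projection yields a natural transformation $\alpha : T \to H \circ V_H$, and the fibre of $\alpha_{(X,P)}$ over any $x : H X$ is definitionally $P(x) = U_H(X, P, x)$. Hence $U_H$ is the functor of fibres of $\alpha$.

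Next, I note that $\C'$ is itself complete: the inductive step in the proof of \cref{thm:sorts-complete} uses only that the ambient category is complete, not that it is a base category, so exactly the same construction applies. We may therefore invoke \cref{lem:fibre-relative-continuity} (with $\C'$ in the role of $\C$), which reduces the desired relative continuity of $U_H$ to showing that for every small diagram $D : \I \to \C'$ with limit cone $L$, the naturality square whose rows are $TL \to \lim(T \circ D)$ and $(H \circ V_H)(L) \to \lim((H \circ V_H) \circ D)$ is a pullback in $\Set$.

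I verify this pullback condition by inspection of how limits in $\C'$ are constructed. Translating $D$ through the equivalence of $\C'$ with the category $\D$ of triples $(X, Y, h)$ from the proof of \cref{thm:sorts-complete}, the $Y$-component of $D$ is precisely $T \circ D$, and the proof of that theorem builds $TL$ as the pullback of $\lim(T \circ D) \to \lim(H \circ V_H \circ D)$ along the canonical map $H(\lim(V_H \circ D)) \to \lim(H \circ V_H \circ D)$. This is exactly the cospan appearing in the square, so the square is a pullback by construction. The main obstacle I anticipate is purely bookkeeping: checking that the canonical comparison maps of the naturality square agree with those produced by the pullback construction of the limit in $\C'$, via the fibration/family equivalence. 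This is a routine diagram chase from the universal properties of the limits involved, but requires some care to handle correctly.
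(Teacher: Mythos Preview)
Your proposal is correct and follows essentially the same approach as the paper: define the total space functor (the paper calls it $\widetilde U_H$), recognise $U_H$ as the functor of fibres of the first projection $\widetilde U_H \to H \circ V_H$, and then invoke \cref{lem:fibre-relative-continuity}, reducing the claim to the pullback property that is built into the limit construction of \cref{thm:sorts-complete}. You spell out more explicitly than the paper does both that $\C'$ is complete (needed as a hypothesis of the lemma) and why the resulting square is a pullback, but the strategy is the same.
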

\begin{proof}
Let $\widetilde{U}_H: \C' \to \Set$ be the functor $\widetilde{U}_H(X, P) = (\Sigma x : H X)P(x)$. There is an obvious natural transformation $\theta : \widetilde U_H \to H \circ V_H$ given by the first projection.
Clearly, $U_H$ is the functor of fibres of $\theta$, hence by \cref{lem:fibre-relative-continuity} all we need to show is that $\theta$ maps limit cones in $\C'$ to pullback squares, which follows immediately from the construction of limits in $\C'$. \proofDone
\end{proof}

\subsection{Reindexing Target Functors}
\label{sec:reindexing}

In many cases, we can obtain suitable target functors by composing the desired base target functor with the forgetful functor to the appropriate stage of the base category. When building constructors one at a time, it will follow from \cref{thm:algebras-complete} and \cref{thm:sorts-complete} applied to the previous steps that this forgetful functor is continuous, and the relative continuity of the target functor will follow.
In more complicated examples, composing with a forgetful functor is
not quite enough. We often want to ``substitute into'' or reindex a
target functor to target a specific element. For example, in the context of
\cref{ex:tt-in-tt}, consider a
hypothetical modified $\sigmaTy$ constructor of the form
\begin{equation*}
  \sigmaTy': \big(\Sigma \Gamma : \Con. \Sigma A : \Ty(\Gamma).\Ty(\extCon\,\Gamma\,A)\big) \to \Ty(\extCon\,\Gamma\,A).
\end{equation*}
We want the target functor to return the set
$\Ty(\extCon\,\Gamma\,A)$, and not just $\Ty(x)$ for a new argument
$x$, which is the result of the base target functor. We can obtain the desired target functor as a composition
\begin{equation*}
\xymatrix{
\elt{\C}{F} \ar[r]^-S & \elt{\Fam(\Set)}{\mathsf{\pi_1}} \ar[r]^-{U_H} & \Set,
  }
\end{equation*}
where $\C$ is the category with objects tuples $(C,T,\epsilon, ext)$, $F : \C \toF \Set$ is the functor giving the arguments of the constructor $\sigmaTy'$, 
$U_H$ is the base target functor corresponding to the second sort, and $S$ is the functor defined by $S(C,T,\epsilon, ext, \Gamma, A, B) \defeq (C, T, ext\,\Gamma\,A)$.

%For example, let us consider a modification of the $\sigma$ constructor (TODO: add reference to example), where we insert a substitution in the return type:
%\begin{equation*}
%  \sigma: (\Gamma : C) \to (A : T\Gamma) \to T (\Gamma.A) \to T(\Gamma.A).
%\end{equation*}
%Here we start with a certain category of algebras $\C$ (the details of which are not particularly important for now), which is complete, and equipped with a forgetful functor $V$ to the base category $\D$, whose objects are pairs $(C, T)$ of a set $C$ and a family $T$ indexed over it.

%The argument functor $F$ maps an object $X$ of $\C$ to the dependent sum
%\begin{equation*}
%  (\Gamma : C) \times (A : T\Gamma) \times T (\Gamma.A),
%\end{equation*}
%where $VX = (C, T)$.  It is easy to see that this defines a functor on $\C$.

%The target functor $G$ is similarly defined: given an object $(X, \Gamma, A, B)$ of the category of elements of $F$, we map it to the set $T (\Gamma.A)$, where again $VX = (C, T)$. Now we see that the target functor is not directly given by the composition of a base target functor with a forgetful functor.  In fact, in this case, we can obtain $G$ as the composition:
%\begin{equation*}
%  \xymatrix{
%    G : \elt{\C}{F} \ar[r]^S & \elt{\D}{\id} \ar[r]^{U_H} & \Set,
%  }
%\end{equation*}
%where $U_H$ is the base target functor corresponding to the second sort, and $S$ is defined by
%\begin{equation*} S(X, \Gamma, A, B) = (C, T, \Gamma.A), \end{equation*}
%and once more $VX = (C, T)$.

Since the functors $S$ that we want to compose with in order to
``substitute'' or reindex are of a special form, the resulting functor
will still be relatively continuous when we start with a relatively
continuous functor. This is made precise by the following result:

%To deal with such examples of ``substituted'' base targets once and for all, we observe that a functor like $S$ has a very special form, which allows us to conclude that composing with it preserves relative continuity.

\begin{lemma}[Preservation of relative limit cones]
\label{lem:substitution-lemma}
Suppose given a commutative diagram of categories and functors
\begin{equation*}
\xymatrix{
  \A \ar[r]^F \ar[d]_{U'} &
  \B \ar[d]^{V'} \\
  \C \ar[r]_G \ar[d]_U &
  \D \ar[d]^V \\
  \C_0 & \D_0,
}
\end{equation*}
where $\C_0$ and $D_0$ are complete, and $G$ maps $U$-limit cones to $V$-limit cones.  Then $F$ maps $(U \circ U')$-limit cones to $(V \circ V')$-limit cones.  In particular, if $\C$ and $\D$ are complete and $G$ is continuous, then $F$ preserves relative limit cones.
\end{lemma}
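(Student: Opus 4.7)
The proof is a definition chase through the commutative diagram. My plan is to fix a small diagram $X : \I \to \A$ and a cone $\pi : L \to X$ in $\A$ which is a $(U \circ U')$-limit cone, meaning by definition that $(U \circ U')(\pi)$ is a limit cone in $\C_0$, and then to unfold what has to be shown. The goal is that $F(\pi)$ is a $(V \circ V')$-limit cone in $\B$, i.e.\ that $(V \circ V')(F(\pi))$ is a limit cone in $\D_0$. The crucial move is to exploit the commutativity $V' \circ F = G \circ U'$ of the middle square to rewrite
\[ (V \circ V')(F(\pi)) \;=\; V(V'(F(\pi))) \;=\; V(G(U'(\pi))), \]
reducing the task to showing that $V(G(U'(\pi)))$ is a limit cone in $\D_0$.

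From there the hypotheses fire in sequence. First, $(U \circ U')(\pi) = U(U'(\pi))$ is a limit cone in $\C_0$ by our assumption on $\pi$, so by definition the cone $U'(\pi)$ in $\C$ is a $U$-limit cone. Second, the hypothesis on $G$ then delivers that $G(U'(\pi))$ is a $V$-limit cone in $\D$. Third, unfolding that definition yields exactly that $V(G(U'(\pi)))$ is a limit cone in $\D_0$, which combined with the rewrite above completes the argument.

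For the ``in particular'' clause, I would specialise the general statement by taking $\C_0 \defeq \C$, $\D_0 \defeq \D$, and $U$, $V$ to be identity functors, which is possible since $\C$ and $\D$ are assumed complete. Then $U$-limit cones are just ordinary limit cones in $\C$, the hypothesis on $G$ becomes ordinary continuity of $G$, and $U \circ U' = U'$ together with $V \circ V' = V'$, so the conclusion becomes precisely that $F$ preserves $U'$-limit cones, i.e.\ relative limit cones with respect to $U'$ and $V'$. I do not expect any real obstacle here: every step is the unfolding of a definition. The only things to be vigilant about are that only the middle square of the diagram is used (no commutativity involving $U$ or $V$ is assumed or needed) and that $\C_0$ and $\D_0$ appear only as the passive ambient complete categories hosting the respective limit cones.
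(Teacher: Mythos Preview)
Your proof is correct and is precisely the definition chase the paper alludes to with its one-line ``Immediate from the definition of relative limit cone.'' Your handling of the ``in particular'' clause by specialising $U$ and $V$ to identities is also the intended reading.
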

\begin{proof}
Immediate from the definition of relative limit cone. \proofDone
\end{proof}

\begin{example}
In our example above, we have the following diagram:
\begin{equation*}
  \xymatrix{
    \elt{\C}{F} \ar[r]^-S \ar[d] &
    \elt{\Fam(\Set)}{\mathsf{\pi_1}} \ar[r]^-{U_H} \ar[d] & \Set \\
    \C \ar[r]^-V & \Fam(\Set)
  }
\end{equation*}
where $V : \C \toF \Fam(\Set)$ is the forgetful functor, and hence
continuous.  It follows from the second statement of
\cref{lem:substitution-lemma} that $S$ preserves relative limit cones,
hence $G = U_H \circ S$ is relatively continuous by
\cref{thm:base-target-continuous}.
\end{example}

\subsection{Path Constructors}

Path constructors are constructors where the target functor $G$ returns an \emph{equality} type. They can e.g.\ be used to  express laws when constructing an initial algebra of an algebraic theory as a QIT. We saw an example of this in
\cref{ex:trees}, where we had a path constructor of the form
\begin{equation*}
  \mathsf{mix}: (f : A \to T) \to (\sigma : A \cong A)
             \to \mathsf{node}(f) = \mathsf{node}(f \circ \sigma).
\end{equation*}
The argument functor for $\mathsf{mix}$ is entirely unproblematic. %: starting from a category of algebras for the previous stage, we define a functor $F$ that returns the set of pairs ($f$, $\sigma$) with the types above.
%The target functor, however, must be defined as a $\Set$-valued functor returning a proposition regarded as a set with at most one element, and the problem is showing that such a functor is relatively continuous. This is the aim of the current section.
However, it is perhaps not so clear that the target functor, which sends $(X, l, n, f, \sigma)$ to the equality type $n(f) =_{X} n(f \circ \sigma)$, is relatively continuous. The aim of the current section is to show this for any functor of this form.
Our plan is to start with the prototypical example of such an equality functor, observe that it is relatively continuous, and then show that any other target functor that can occur in a path constructor can be obtained by substitution such that \cref{lem:substitution-lemma} is applicable.

\begin{definition}\label{defn:equality-functor}
  Let $\Eq : \elt{\Set}{(\id \times \id)} \toF \Set$ be the functor defined on objects by
  $\Eq(X, x, y) \defeq x =_X y$ and on morphisms by
  $\Eq(f, p_x, p_y) \defeq p_x \ct (\ap{f}{-}) \ct p_y^{-1}$.
  %  Let $\id : \Set \to \Set$ the identity functor, and denote by $\id \times \id$ the (pointwise) product of $\id$ with itself. There is an obvious \emph{diagonal} natural transformation $\Delta : \id \to \id \times \id$. The functor of fibres of $\Delta$ is denoted
%\begin{equation*} \Eq : \elt{\Set}{(\id \times \id)} \to \Set \end{equation*}
%and is called the \emph{standard equality functor}.
\end{definition}

It is not hard to see that $\Eq$ is a functor. Furthermore, $\Eq$ is
the functor of fibres of the obvious diagonal natural transformation
$\Delta : \id \to \id \times \id$.

%\begin{definition}\label{defn:equality-functor}
%Let $\id : \Set \to \Set$ the identity functor, and denote by $\id \times \id$ the (pointwise) product of $\id$ with itself. There is an obvious \emph{diagonal} natural transformation $\Delta : \id \to \id \times \id$. The functor of fibres of $\Delta$ is denoted
%\begin{equation*} \Eq : \elt{\Set}{(\id \times \id)} \to \Set \end{equation*}
%and is called the \emph{standard equality functor}.
%\end{definition}
%
%Using the explicit definition of a functor of fibres, we see that $\Eq$ maps a bipointed set $(X, x, x')$ to the type of equalities $x = x'$. In particular, $\Eq(X, x, x)$ is a one-element set, and $\Eq(X, x, x')$ is empty when $x \neq x'$.

\begin{lemma} \label{lem:equality-continuous}
The standard equality functor is relatively continuous.
\end{lemma}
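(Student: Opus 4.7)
The plan is to invoke \cref{lem:fibre-relative-continuity} rather than compute with the sets $\Eq(L, z_1, z_2)$ directly. Since the text just observed that $\Eq$ is the functor of fibres of the diagonal natural transformation $\Delta : \id \to \id \times \id$, that lemma reduces the claim to checking, for every small diagram $X : \I \toF \Set$ with limit cone $\pi : L \to X$, that the square
\[
\xymatrix{
  L \ar[r] \ar[d]_{\Delta_L} & \lim X \ar[d]^{\lim \Delta} \\
  L \times L \ar[r] & \lim(X \times X)
}
\]
is a pullback in $\Set$, where the horizontal maps are the canonical comparisons induced by the cone.

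The key observation I would make is that both horizontal arrows are in fact isomorphisms. The top arrow is an iso because $L$ is by hypothesis already a limit of $X$, so the identity functor sends the cone to itself. The bottom arrow is an iso because binary products in $\Set$ commute with arbitrary small limits, and a short unfolding of the construction of limits in $\Set$ shows that the canonical comparison $L \times L \to \lim (X \times X)$ is exactly this iso $L \times L \cong \lim X \times \lim X \cong \lim(X \times X)$. Commutativity of the square follows from naturality of $\Delta$, or equivalently from the universal property of $\lim(X \times X)$.

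Once both horizontal edges are isomorphisms, a commutative square is trivially a pullback, which finishes the proof. I do not expect any real obstacle; the only sliver of content is recognising that $\id \times \id$ preserves limits in $\Set$, so that the relative-continuity condition cashed out by \cref{lem:fibre-relative-continuity} becomes a pullback of isos. This cleanly isolates the geometric content of the statement from the mechanics of limits in categories of elements.
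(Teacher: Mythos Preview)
Your proposal is correct and follows essentially the same route as the paper: invoke \cref{lem:fibre-relative-continuity} for $\Eq$ as the functor of fibres of $\Delta : \id \to \id \times \id$, observe that both horizontal comparison maps are isomorphisms (the paper phrases this as ``$\id$ and $\id \times \id$ are representable, hence continuous''), and conclude that the square is a pullback.
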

\begin{proof}
Both $\id$ and $\id \times \id$ are representable, and in particular continuous. Therefore, the horizontal maps in the diagram of \cref{lem:fibre-relative-continuity} are isomorphisms, which implies that the square is a pullback, hence $\Eq$ is relatively continuous. \proofDone
\end{proof}

With the help of this lemma, one can prove that a large class of equality functors are suitable targets for constructors:

\begin{theorem}[Equality functors are relatively continuous] \label{lem:general-equality-relatively-continuous}
Let $\C$ be a complete category, $F : \C \toF \Set$ any functor, and $G : \elt{\C}{F} \toF \Set$ a relatively continuous functor. Suppose given two global elements $l, r$ of $G$, i.e.\ natural transformations $1 \to G$. The map
\begin{equation*} \Eq_G(l, r) : \elt{\C}{F} \to \Set, \end{equation*}
defined by $\Eq_G(l, r)(X, x) = (l_{(X,x)} =_{G(X, x)} r_{(X,x)})$,
extends to a relatively continuous functor.
\end{theorem}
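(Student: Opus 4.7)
The plan is to express $\Eq_G(l, r)$ as a composition $\Eq \circ S$ for a suitable functor $S$, and then derive its relative continuity by combining \cref{lem:equality-continuous} with the substitution lemma \cref{lem:substitution-lemma}. Concretely, I would define $S : \elt{\C}{F} \toF \elt{\Set}{(\id \times \id)}$ on objects by $S(X, x) \defeq (G(X, x), l_{(X,x)}, r_{(X,x)})$, and on a morphism $(f, q) : (X, x) \to (Y, y)$ by the underlying function $G(f, q) : G(X, x) \to G(Y, y)$, which by naturality of $l$ and $r$ satisfies $G(f, q)(l_{(X,x)}) = l_{(Y,y)}$ and $G(f, q)(r_{(X,x)}) = r_{(Y,y)}$ and is hence a morphism in $\elt{\Set}{(\id \times \id)}$. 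Functoriality of $S$ reduces to functoriality of $G$. Unfolding \cref{defn:equality-functor} then shows $\Eq \circ S = \Eq_G(l,r)$ on objects and morphisms, which both supplies the functorial action of $\Eq_G(l, r)$ and reduces its relative continuity to properties of $S$ and $\Eq$.

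For relative continuity, I would apply \cref{lem:substitution-lemma} to the diagram
\begin{equation*}
\xymatrix{
  \elt{\C}{F} \ar[r]^-{S} \ar[d]_{\id} & \elt{\Set}{(\id \times \id)} \ar[d]^{p_2} \\
  \elt{\C}{F} \ar[r]^-{G} \ar[d]_{p_1} & \Set \ar[d]^{\id} \\
  \C & \Set,
}
\end{equation*}
where $p_1$ and $p_2$ are the forgetful functors from the respective categories of elements. The top square commutes on the nose because $p_2 \circ S = G$ by construction. The complete categories supplying $\C_0, \D_0$ in \cref{lem:substitution-lemma} are $\C$ (complete by assumption) and $\Set$, and the hypothesis that $G$ maps $p_1$-limit cones to limit cones in $\Set$ is precisely the assumed relative continuity of $G$. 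The lemma therefore yields that $S$ maps $p_1$-limit cones in $\elt{\C}{F}$ to $p_2$-limit cones in $\elt{\Set}{(\id \times \id)}$. Post-composing with $\Eq$, which is relatively continuous with respect to $p_2$ by \cref{lem:equality-continuous}, shows that $\Eq \circ S = \Eq_G(l,r)$ sends $p_1$-limit cones to limit cones in $\Set$; this is exactly the desired relative continuity.

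The main obstacle is really just bookkeeping: ensuring that $S$ is well-defined as a functor and that $p_2 \circ S = G$ holds strictly so that \cref{lem:substitution-lemma} applies without fuss. Once this is in place the two preservation properties chain together to give the result, without any need to reason directly about limits in the (generally incomplete) source category $\elt{\C}{F}$.
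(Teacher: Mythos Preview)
Your proposal is correct and follows essentially the same route as the paper: define $S(X,x) = (G(X,x), l_{(X,x)}, r_{(X,x)})$, factor $\Eq_G(l,r)$ as $\Eq \circ S$, use \cref{lem:substitution-lemma} applied to exactly the diagram you draw to show $S$ preserves relative limit cones, and finish with \cref{lem:equality-continuous}. Your write-up is in fact slightly more explicit than the paper's about the action of $S$ on morphisms and the commutativity $p_2 \circ S = G$, but the argument is the same.
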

\begin{proof}
Define $S : \elt{\C}{F} \toF \elt{\Set}{(\id \times \id)}$ by
$S(X, x) = (G(X, x), l_{(X, x)}, r_{(X, x)})$;
this is a functor since $l$ and $r$ are natural.  Observe that $\Eq_G(l, r)$ can now be obtained as the composition
\begin{equation*}
  \xymatrix{
    \elt{\C}{F} \ar[r]^-S & \elt{\Set}{(\id \times \id)} \ar[r]^-{\Eq} & \Set,
  }
\end{equation*}
hence the conclusion of the lemma will follow once we establish that $S$ preserves relative limit cones. Consider the diagram:
\begin{equation*}
  \xymatrix{
    \elt{\C}{F} \ar[r]^-S \ar[d]_-{\id} &
    \elt{\Set}{\id \times \id} \ar[d] \\
    \elt{\C}{F} \ar[r]^-G \ar[d] &
    \Set \ar[d]^{\id} \\
    \C &
    \Set,
  }
\end{equation*}
which commutes by definition of $S$. Since $G$ is relatively continuous by assumption, it preserves relative limit cones, hence so does $S$ by \cref{lem:substitution-lemma}, as required. \proofDone
\end{proof}

\begin{example}[Permutable trees]
  The target of the $\mathsf{mix}$ constructor from \cref{ex:trees} can
  be obtained as an equality functor in this sense. We take $G$ to be
  the underlying sort, which is relatively continuous by the results
  of the previous section. The global elements $l$ and $r$ are defined
  %at component $(A, l, n, f, \sigma)$
  by $l_{(X, l, n, f, \sigma)} \defeq n(f)$ and
  $r_{(X, l, n, f, \sigma)} \defeq n(f \circ \sigma)$. Their
  naturality can easily be verified directly.
  \end{example}

%We will call functors of the form $\Eq_G(l, r)$ \emph{equality functors}. Our example of path constructor above can now be obtained using an equality functor as target. We take $G$ to be the underlying sort, which is relatively continuous by the results of the previous section (although this is immediate in this particular case, since it is a composition of forgetful functors). The global elements $l$ and $r$ can be defined by just following their expressions as terms: namely, for a given object $X = (T, \mathsf{leaf}, \mathsf{node}, f, \sigma)$, $l_X$ is simply $\mathsf{node}(f)$ and $r_X$ is $\mathsf{node}(f \circ \sigma)$. Their naturality can easily be verified directly. Finally, the whole constructor is given by the pair $(F, \Eq_G(l, r))$.

Iterating equality functors, one can also express \emph{higher} path constructors, but in our limited setting of inductively defined \emph{sets}, there is little reason to go beyond one level of path constructors --- higher ones will have no effect on the resulting inductive type. However, we believe that the ease with which \cref{lem:general-equality-relatively-continuous} can be applied iteratively will be an important feature when generalising our technique to general higher inductive types. We discuss this further in \cref{sec:conclusion}.

\subsection{Categories of Algebras are Complete}

If $\C$ is a complete category, and $(F, G)$ is a constructor specification on $\C$, recall that the category of algebras $\Alg{\C}{(F,G)}$ from \cref{defn:algebras} has ``dependent $(F,G)$-dialgebras'' as objects, and maps that commute with the dialgebra structures as morphisms.
In this section, we will show that $\Alg{\C}{(F,G)}$ is complete, and that its forgetful functor is continuous. The significance of this result is twofold:

First of all, it makes it possible to use the power of limits when reasoning about properties of algebras; in particular, we will show in \cref{sec:elimination} how, using products and equalisers, one can extend the classical equivalence between initiality and induction for ordinary inductive types to our setting.

Secondly, it goes a long way towards establishing an existence result for initial algebras; since a category of algebras over $n + 1$ constructors is complete, and the forgetful functor to the category of algebras over the first $n$ preserves limits, it follows from the adjoint functor theorem that this functor has a left adjoint if and only if it satisfies the solution set condition. Since this can be applied to every stage, we get a left adjoint for the forgetful functor down to $\Set$, and in particular an initial object.
Given our assumptions on constructors, there is no reason to expect the solution set condition to hold at this generality. However, we expect it to follow from an appropriate ``accessibility'' condition on the argument functors. This is discussed further in \cref{sec:conclusion}.

\begin{theorem}[Categories of algebras are complete] \label{thm:algebras-complete}
Let $\C$ be a complete category, and $(F, G)$ a constructor specification on it. Then $\Alg{\C}{(F,G)}$ is complete.
\end{theorem}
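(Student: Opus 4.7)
The plan is to build limits in $\Alg{\C}{(F,G)}$ on top of limits in $\C$, and show that the forgetful functor creates them, so completeness follows from completeness of $\C$ (which is assumed). Given a small diagram $D : \I \to \Alg{\C}{(F,G)}$, write $D_i = (X_i, \theta_i)$, let $U : \Alg{\C}{(F,G)} \to \C$ be the forgetful functor, and form $L = \lim_{\I} (U \circ D)$ in $\C$ with limit projections $\pi_i : L \to X_i$. The task is then to equip $L$ with a unique algebra structure $\theta_L$ turning every $\pi_i$ into an algebra morphism, and to show that the resulting cone is universal in $\Alg{\C}{(F,G)}$.

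First I would construct $\theta_L$. Fix $x : FL$ and consider the diagram $\widetilde D_x : \I \to \elt{\C}{F}$ given by $\widetilde D_x(i) \defeq (X_i, F(\pi_i) x)$, with morphisms $\overline{D(g)}$ for $g : i \to j$ in $\I$ (well-defined because $\pi_j = D(g) \circ \pi_i$). The cone with vertex $(L, x)$ and components $\overline{\pi_i}$ is a relative limit cone with respect to the forgetful functor $\elt{\C}{F} \to \C$, since its image under this functor is precisely the original limit cone $(\pi_i)$ in $\C$. Because $G$ is relatively continuous, the canonical map $G(L, x) \to \lim_i G(X_i, F(\pi_i) x)$ is an isomorphism. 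The family $(\theta_i(F(\pi_i) x))_i$ is compatible with $\widetilde D_x$: for $g : i \to j$ the algebra morphism condition on $D(g)$ applied to $y = F(\pi_i)x$ gives $\theta_j(F(\pi_j)x) = G(\overline{D(g)})(\theta_i(F(\pi_i)x))$. This compatible family therefore determines a unique element $\theta_L(x) \in G(L, x)$, and by construction $G(\overline{\pi_i})(\theta_L(x)) = \theta_i(F(\pi_i) x)$, which is exactly the statement that each $\pi_i : (L, \theta_L) \to (X_i, \theta_i)$ is an algebra morphism.

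Next I would verify the universal property. Let $(Y, \psi)$ be another algebra equipped with a cone $q_i : (Y, \psi) \to (X_i, \theta_i)$ in $\Alg{\C}{(F,G)}$. Universality of $L$ in $\C$ gives a unique morphism $u : Y \to L$ with $\pi_i \circ u = q_i$. The only thing left is that $u$ is an algebra morphism, i.e.\ $\theta_L(F(u) y) = G(\overline{u})(\psi y)$ for every $y : FY$. By the isomorphism $G(L, F(u)y) \cong \lim_i G(X_i, F(\pi_i)(F(u)y))$ obtained above (with $x = F(u) y$), it is enough to check that both sides have the same image under $G(\overline{\pi_i})$ for all $i$. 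On one hand $G(\overline{\pi_i})(\theta_L(F(u)y)) = \theta_i(F(q_i) y)$ by the defining property of $\theta_L$; on the other hand $G(\overline{\pi_i}) \circ G(\overline{u}) = G(\overline{q_i})$ and the fact that $q_i$ is an algebra morphism give $G(\overline{q_i})(\psi y) = \theta_i(F(q_i) y)$ as well. Hence $u$ is an algebra morphism, which is necessarily unique because $U$ is faithful.

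The main obstacle is the construction of $\theta_L$: one must recognise that the cone $(L, x) \to (X_i, F(\pi_i)x)$ in $\elt{\C}{F}$ is a \emph{relative} limit cone (even though $\elt{\C}{F}$ need not be complete), so that relative continuity of $G$ yields the required isomorphism $G(L, x) \cong \lim_i G(X_i, F(\pi_i)x)$. Once this is in place, the algebra compatibility of the $D(g)$ delivers precisely the coherent family of elements needed, and everything else is essentially a diagram chase. A pleasant by-product of the argument is that $U$ creates limits, which is exactly what will be needed in \cref{sec:elimination}.
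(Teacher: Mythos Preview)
Your proof is correct and follows the same overall strategy as the paper: construct the limit $L$ in $\C$, use relative continuity of $G$ to endow $L$ with an algebra structure compatible with the projections, and then verify universality. The only real difference is in packaging. The paper passes to the total space $\widetilde G(X) = \Sigma_{x:FX} G(X,x)$, invokes \cref{lem:fibre-relative-continuity} to turn relative continuity of $G$ into a pullback square
\[
\xymatrix{
  \widetilde GL \ar[r] \ar[d] & \lim \widetilde GX \ar[d] \\
  FL \ar[r] & \lim FX,
}
\]
and then obtains $\theta_L$ by pulling back the section $\lim s$ along the bottom map. You instead apply relative continuity directly and fibrewise: for each $x:FL$ you exhibit the $U$-limit cone $(L,x) \to (X_i, F(\pi_i)x)$ in $\elt{\C}{F}$ and use $G(L,x) \cong \lim_i G(X_i, F(\pi_i)x)$ to assemble $\theta_L(x)$ from the compatible family $(\theta_i(F(\pi_i)x))_i$. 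Your route is more elementary (it avoids the auxiliary lemma entirely), while the paper's route makes the global structure---that algebras are sections of $p:\widetilde G \to F$---more visible. Both yield, as you note, that the forgetful functor creates limits.
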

\begin{proof}
Let $\widetilde G : \C \to \Set$ be defined by $\widetilde G(X) = (x : F X) \times G(X, x)$, and let $p : \widetilde G \to F$ be the first projection. Then clearly $G$ is the functor of fibres of $p$.
Now consider a diagram $Y : \I \to \Alg{\C}{(F,G)}$. The diagram $Y$ can be decomposed into a diagram $X : \I \to \C$, and a natural transformation $s : FX \to \widetilde GX$ which is a section of $pX$. If $\pi : L \to X$ is a limit cone for $X$, by \cref{lem:fibre-relative-continuity} we get a pullback square
\begin{equation*}
\xymatrix{
  \widetilde GL \ar[r] \ar[d] &
  \lim \widetilde GX \ar[d] \\
  FL \ar[r] &
  \lim FX,
}
\end{equation*}
and $s$ determines a section $\lim s : \lim F X \to \lim \widetilde G X$ of the right vertical morphism in the diagram. Let $t : FL \to \widetilde GL$ be the section of $p_L$ obtained by pulling back $\lim s$. In particular, $t$ has the form $\langle \id, \theta \rangle$, where
\begin{equation*}
  \theta : (x: F X) \to G(X, x),
\end{equation*}
and $\pi$ extends to a cone $(L, \theta) \to Y$ in $\Alg{\C}{(F,G)}$. It is now easy to verify that this is a limit cone. \proofDone
\end{proof}

\section{Elimination Principles}\label{sec:elimination}

So far, we have given rules for specifying a QIIT by
giving a sort signature and a list of constructors.
As type-theoretical rules, these correspond to the formation and
introduction rules for the QIIT. In this section, we introduce he
corresponding elimination rules, stating that a QIIT is the smallest
type closed under its constructors. We show that a categorical
formulation of the elimination rules is equivalent to the universal
property of initiality.

\subsection{The Section Induction Principle}

%A. Joyal. 2014. Categorical homotopy type theory. Slides of a seminar given at MIT. Available from http://ncatlab.org/homotopytypetheory/files/Joyal.pdf. (2014).

The elimination principle for an algebra $\barAlg{X}$ states that
\emph{every fibred algebra over $\barAlg{X}$ has a section}, where a
fibred algebra over $\barAlg{X}$ is an algebra family
``$Q : \barAlg{X} \to \Set$'', and a section of it a dependent algebra
morphism ``$(x : \barAlg{X}) \to Q(x)$''.
The usual correspondence between type families and fibrations
extends to algebras (see the examples below), and so we formulate
the elimination rule for $\barAlg{X}$ as $\barAlg{X}$ being section
inductive in the category of algebras in the following sense:

\begin{definition}[Section inductive]
  An object $X$ of a category $\C$ is \emph{section inductive} if for
  every object $Y$ of $\C$ and morphism $p : Y \to X$, there exists
  $s : X \to Y$ such that $p \circ s = \id_X$.
\end{definition}
For an algebra $\barAlg{X}$, the existence of the underlying
function(s) $\barAlg{X} \to \barAlg{Y}$ corresponds to the elimination
rules, while the fact that they are algebra morphisms corresponds to
the computation rules. %Let us briefly explore what it means to be
%section inductive for our running examples.

\begin{example}[Permutable trees]
  Consider permutable-tree algebras, e.g.\ tuples $(X, l, n, p)$ as in
  \cref{ex:algebras-trees}. A fibred permutable-tree algebra over
  $(X, l, n, p)$ consists of $Q : X \to \Set$ together with $m_l : Q(l)$ and
  \begin{equation*}
  \begin{alignedat}{3}
    & m_n : &\;& (f : A \to X) \to (g : (a : A) \to Q(f\,a)) \to Q(n\,f) \\
    & m_p : &  & (f : A \to X) \to (g : (a : A) \to Q(f\,a)) \to (\sigma : A \cong A) \\ 
    &       &  & \phantom{(f : A \to X) \to (g : (a : A)\to )} \to m_n\,f\,g \pathOver{\ap{Q}{p}} m_n\,(f \circ \sigma)\,(g \circ \sigma)
  \end{alignedat}
  \end{equation*}
  Here the type $x \pathOver{p} y$ is the types of equalities between elements $x : A$ and $y : B$ in different types, themselves related by an equality proof $p : A = B$.
  This data can be arranged into an ordinary algebra
  $\Sigma x : X.Q(x)$, together
  with an algebra morphism $\pi_1 : (\Sigma x : X)Q(x) \to X$. A
  section of this morphism is exactly a dependent function
  $h : (x : X) \to Q(x)$. Since $h$ comes from an algebra morphism, we
  further know that e.g.\ $h(l) = m_l$ and
  $h(n\,f) = m_n\, f\,(h \circ f)$. Conversely, if we start with an
  algebra morphism $g : (X', l', n', p') \to (X, l, n, p)$, this gives
  rise to a fibred algebra $(Q, m_l, m_n, m_p)$ by
  considering the fibres $Q(x) = \Sigma y : A'.g(y) = x$ of
  $p$. The points $m_l$, $m_n$ and the path $m_p$ arise from the proof
  that $g$ preserves $a'$, $b'$ and $p'$.
  % Consider interval algebras, e.g.\ tuples $(A, a, b, p)$ from
  % TODO-ref-missing. A fibred interval algebra over $(A, a, b, p)$
  % consists of $Q : A \to \Set$ together with $m_a : Q(a)$,
  % $m_b : Q(b)$ and $m_p : m_a \pathOver{\ap{Q}{p}} m_b$. This data can
  % be arranged into an ordinary interval algebra
  % $((\Sigma x : A)Q(x), (a, m_a), (b, m_b), \pairEq{p}{m_p})$ together
  % with an algebra morphism $\pi_1 : (\Sigma x : A)Q(x) \to A$. A
  % section of this morphism is exactly a dependent function
  % $f : (x : A) \to Q(x)$. Since $f$ comes from an algebra morphism, we
  % further know that $f(a) = m_a$, $f(b) = m_b$, and $\ap{f}{p} =
  % m_p$. Conversely, if we start with an algebra morphism
  % $p : (A', a', b', p') \to (A, a, b, p)$, this gives rise to a fibred
  % interval algebra $(Q : A \to \Set, m_a, m_b, m_p)$ by considering
  % the fibres $Q(x) = (\Sigma y : A')(p(y) = x)$ of $p$. The points
  % $m_a$, $m_b$ and the path $m_p$ arise from the proof that $p$
  % preserves $a'$, $b'$ and $p'$.
\end{example}

\begin{example}[Contexts and types]
  For context-and-types algebras from \cref{ex:algebras-con-ty}, a fibred algebra
  over $(C, T, e, c, b, s, s_{\mathrm{eq}})$ consists of $Q : C \to \Set$ and $R : (x : C) \to T(x) \to Q(x) \to \Set$, together with $m_e : Q(e)$ and
  \begin{equation*}
  \begin{alignedat}{3}
    & m_c : && (\Gamma : C) \to (x : Q(\Gamma)) \to (A : T(\Gamma)) \to R(\Gamma, A, x) \to Q(c \, \Gamma \, A) \\
    & m_b : && (\Gamma : C) \to (x : Q(\Gamma)) \to R(\Gamma, b\,\Gamma, x) \\
    & m_s : && (\Gamma : C) \to (x : Q(\Gamma)) \to (A : T(\Gamma)) \to (y : R(\Gamma, A, x) \to (B : T(c\,\Gamma\,A))  \\ 
    &       &&  \phantom{(\Gamma : C)} \to (z : R(c\,\Gamma\,A, B, m_c\,\Gamma\,x\,A\,y)) \to R(\Gamma, s\,\Gamma\,A\,B, x) \\
    & m_{s_{\mathrm{eq}}} : && (\Gamma : C) \to (x : Q(\Gamma)) \to (A : T(\Gamma)) \to (y : R(\Gamma, A, x)) \\ 
    &       && \phantom{(\Gamma : C)} \to (B : T(c\,\Gamma\,A)) \to (z : R(c\,\Gamma\,A, B, m_c\,\Gamma\,x\,A\,y)) \\ 
    &       && \phantom{(\Gamma : C)} \to m_c\,(c\,\Gamma\,A)\,(m_c\,\Gamma\,x\,A\,y)\,B\,z \pathOver{\ap{Q}{(s_\mathrm{eq}\,\Gamma\,A\,B)}} \\
    &       && \hspace{14em} m_c\,\Gamma\,x\,(s\,\Gamma\,A\,B)\,(m_s\,\Gamma\,x\,A\,y\,B\,z)
  \end{alignedat}
  \end{equation*}
  Again, this data can be arranged into an ordinary algebra with base
  $C' : \Set$, $T' : C' \to \Set$, where $C' = \Sigma x : C.Q(x)$ and
  $T'(x, q) = \Sigma y : T(x).R(x,y,q)$, together with an algebra
  morphism $(\pi_1, \pi_1) : (C', T') \to (C, T)$. A section of this
  morphism gives functions $f : (x : C) \to Q(x)$ and
  $g : (x : C) \to (y : T(x)) \to R(x, y, f\,x)$ that preserve the
  algebra structure.
\end{example}

A general account of the equivalence between the usual formulation of the
elimination rules and the section induction principle is
in Dijkstra~\cite[Section~5.4]{gabeThesis}.

\subsection{Initiality, and its Relation to the Section Induction Principle}

The section induction principle for an algebra $\barAlg{X}$ matches
our intuitive understanding of the elimination rules for $\barAlg{X}$,
but it is perhaps a priori not so clear that e.g.\ satisfying it
defines an algebra uniquely up to equivalence. In this section, we
show that this is the case by proving that the section induction
principle is equivalent to the categorical property of initiality.

\begin{definition}[Initiality]
  An object $X$ of a category $\C$ is initial if for every object $Y$
  of $\C$, the set of morphisms $X \to Y$ is contractible.
\end{definition}
It is immediate that the property of being initial is a mere
proposition. It is also more or less obvious that initiality implies
section induction:

\begin{lemma} \label{thm:initToSec}
If an object $X$  in a category $\C$ is initial, then it is section inductive.
\end{lemma}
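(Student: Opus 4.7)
The plan is straightforward and rests on the contractibility clause in the definition of initiality. Given any object $Y$ together with a morphism $p : Y \to X$, I need to produce a section $s : X \to Y$ of $p$.

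First, I would apply initiality of $X$ to the object $Y$: this yields a (unique) morphism $s : X \to Y$, which is the candidate section. Then I must check the equation $p \circ s = \id_X$. Both sides are morphisms $X \to X$, and by initiality of $X$ applied to the object $X$ itself, the set $\mathcal{C}(X, X)$ is contractible; in particular it has at most one element, so $p \circ s = \id_X$ holds automatically.

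There is no real obstacle here — the argument is the standard observation that an initial object has a unique endomorphism (the identity), which forces every retraction equation to hold. So the proof is essentially two lines: invoke initiality once to get $s$, and invoke it a second time (on the hom-set $\mathcal{C}(X,X)$) to collapse $p \circ s$ and $\id_X$.
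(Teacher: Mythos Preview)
Your proposal is correct and matches the paper's proof essentially line for line: obtain $s : X \to Y$ from initiality, then use uniqueness of morphisms $X \to X$ to conclude $p \circ s = \id_X$.
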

\begin{proof}
  Assume $X$ is initial. Given $p : Y \to X$, we need to
  produce $s : X \to Y$ such that $p \circ s = \id_X$.
  Since $X$ is initial, there is an arrow $s : X \to Y$. Further
  $p \circ s : X \to X$, so by uniqueness of morphisms $X \to X$, we
  have $p \circ s = \id_X$. \proofDone
\end{proof}

For the converse, a little bit more structure in $\C$ is needed:

\begin{lemma} \label{thm:secToInit}
If an object $X$ in a category $\C$ with finite limits is
section inductive, then it is initial.
\end{lemma}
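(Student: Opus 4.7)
The plan is to prove initiality by establishing, for an arbitrary object $Y$, both the existence and uniqueness of morphisms $X \to Y$, using the two finite-limit constructions that are available: binary products and equalisers.

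For existence, I would form the binary product $X \times Y$ with its projections $\pi_1 : X \times Y \to X$ and $\pi_2 : X \times Y \to Y$. Section induction applied to $\pi_1$ produces a morphism $s : X \to X \times Y$ with $\pi_1 \circ s = \id_X$. Then $\pi_2 \circ s : X \to Y$ is the desired morphism. This is the easier half and uses only the existence of binary products.

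For uniqueness, suppose $f, g : X \to Y$ are two morphisms. I would form their equaliser $e : E \to X$, which is a morphism into $X$, so section induction gives $s : X \to E$ with $e \circ s = \id_X$. The defining property of the equaliser gives $f \circ e = g \circ e$, and composing with $s$ on the right yields $f = f \circ e \circ s = g \circ e \circ s = g$. This is the step that genuinely needs equalisers, and together with the product construction it uses exactly the finite-limit hypothesis.

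Combining the two parts shows that the hom-set $\C(X, Y)$ is inhabited and any two of its elements are equal, i.e.\ it is contractible as a set (here recall that hom-types are assumed to be sets, so ``inhabited and propositional'' and ``contractible'' coincide). I do not expect any genuine obstacle; the only subtle point is to notice that section induction must be applied to the two different auxiliary objects $X \times Y$ and $E$, which is precisely why finite limits (and not just products or just equalisers) are assumed.
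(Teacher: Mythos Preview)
Your proposal is correct and follows essentially the same approach as the paper: existence via the product projection $\pi_1 : X \times Y \to X$ and uniqueness via the equaliser of $f$ and $g$, each time invoking section induction to obtain the needed section. Your write-up is in fact slightly cleaner than the paper's, which has a minor slip in the order of composition in the uniqueness chain.
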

 \begin{proof}
   Given an object $Y$ in $\C$, we need to provide a unique arrow
   $X \to Y$. Consider the projection $\pi_1 : X \times Y \to X$,
   which is an arrow into $X$, and therefore has a section
   $s : X \to X \times Y$. Our candidate arrow is then
   $\pi_2 \circ s : X \to Y$, which we have to show is unique. Using
   equalisers, we can show that any two arrows $f,g$ out of $X$ to
   some other object $Y$ are equal.  Let $E$ be the equaliser of $f$
   and $g$, then we get a projection map $i : E \to X$. By the section
   principle, this map has a section $s : X \to E$:
   \begin{equation*}
     \xymatrix{
       E \ar[r]^{i} &X \ar@<-.5ex>[r]_-{g} \ar@<.5ex>[r]^-{f} &Y \\
       X \ar[u]^{s} \ar[ur]_{\id_{X}} }
   \end{equation*}
   Hence
   $f = \id_X \circ f = s \circ i \circ f = s \circ i \circ g = \id_X \circ g = g$
   holds. \proofDone
 \end{proof}

Using all these ingredients, we get the main theorem of this section:

\begin{theorem}[Initiality $\cong$ section induction] \label{thm:main}
  An object $X$ in a in a category of algebras $\Alg{\C}{(F,G)}$ being
  initial is equivalent to it being section inductive.
\end{theorem}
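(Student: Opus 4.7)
The plan is to simply combine the two preceding lemmas, noting that a category of algebras provides enough structure for both directions to apply. Specifically, \cref{thm:initToSec} gives initiality $\Rightarrow$ section inductive in any category, with no hypotheses needed, so that direction is immediate. For the converse, \cref{thm:secToInit} requires only that $\Alg{\C}{(F,G)}$ has finite limits, which follows from \cref{thm:algebras-complete}: since $\C$ is complete by assumption (this is built into the definition of a constructor specification, \cref{defn:constructor}) and $(F,G)$ is a constructor specification on $\C$, the category $\Alg{\C}{(F,G)}$ is complete, hence in particular has finite limits.

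Concretely, I would first invoke \cref{thm:algebras-complete} to obtain that $\Alg{\C}{(F,G)}$ has products and equalisers; these are precisely the finite limits used in the proof of \cref{thm:secToInit} (the product $X \times Y$ to produce a candidate morphism, and the equaliser of two parallel morphisms $f, g : X \to Y$ to witness their equality). Then, applying \cref{thm:initToSec} and \cref{thm:secToInit} respectively, we conclude the desired logical equivalence.

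Since both ``$X$ is initial'' and ``$X$ is section inductive'' are mere propositions (the former by definition via contractibility of hom-sets, the latter because one can show that the section obtained for a given $p : Y \to X$ is unique whenever $X$ is initial, which we already know in the section-inductive case via the equaliser argument), the logical equivalence is in fact a type-theoretic equivalence, as stated.

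There is no real obstacle here: both directions are essentially free once one has \cref{thm:algebras-complete} in hand. The substantive work of the paper — ensuring that base categories are complete (\cref{thm:sorts-complete}), that base target and equality functors are relatively continuous (\cref{thm:base-target-continuous}, \cref{lem:general-equality-relatively-continuous}), and that categories of algebras inherit completeness (\cref{thm:algebras-complete}) — has already been done, and the present theorem is the payoff: it justifies that the section induction principle, which matches the intuitive elimination rules for a QIIT, is the same as the categorical universal property of initiality.
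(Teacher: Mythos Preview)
Your proposal is correct and follows essentially the same route as the paper: invoke \cref{thm:algebras-complete} for completeness, combine \cref{thm:initToSec} and \cref{thm:secToInit} for the logical equivalence, and then upgrade to a type-theoretic equivalence by arguing both sides are mere propositions. The paper's argument that section induction is a mere proposition is spelled out slightly more carefully---it assumes the type is inhabited, deduces initiality via \cref{thm:secToInit}, and then uses contractibility of both $X \to Y$ and $X \to X$ to conclude that the whole type $\big(\Sigma s : X \to Y\big)\big(p \circ s = \id_X\big)$ is contractible---but your sketch (``the section is unique once $X$ is initial, which we get from the equaliser argument'') is the same idea in outline.
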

\begin{proof}
  By \cref{thm:algebras-complete}, the category $\Alg{\C}{(F,G)}$ is complete. Hence by
  \cref{thm:initToSec,thm:secToInit}, the two statements are logically
  equivalent. Since a logical equivalence between two mere
  propositions is automatically an equivalence, and initiality is
  easily seen to be a mere proposition, all that remains is to show
  that the section induction property is a mere proposition. For this,
  we may assume that the type in question is inhabited, and it
  suffices to show that the set of sections
  $\big(\Sigma s : X \to Y\big)\big(p \circ s = \id_X\big)$ is a mere
  proposition for any object $Y$. But since $X$ is initial by
  assumption and \cref{thm:secToInit}, the sets $X \to Y$ and
  $X \to X$ are contractible, hence so is the path type
  $p \circ s =_{X \to X} \id_X$, and we are done. \proofDone
\end{proof}

As an application, we can now reason about QIITs using their
categories of algebras. For instance, we get a short proof of the
following fact:

\begin{corollary}
  The interval is equivalent to the unit type.
\end{corollary}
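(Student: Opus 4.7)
The plan is to present the interval $I$ as the QIIT with a single sort and constructors $0_I, 1_I : I$ and $\mathsf{seg}_I : 0_I =_I 1_I$, so that the category of algebras $\C$ has objects $(Y,a,b,q)$ with $a, b : Y$ and $q : a =_Y b$. By the QIIT specification, $I$ carries the initial such algebra. The strategy is to exhibit a specific algebra morphism $p : \unit \to I$ and then invoke the section induction principle to deduce that the underlying set of $I$ is contractible.

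First, I would observe that $(\unit, \star, \star, \mathsf{refl}_\star)$ is an object of $\C$, and define $p$ on underlying functions by $p(\star) \defeq 0_I$. The preservation of $0_I$ holds by reflexivity; the preservation of $1_I$ is a proposition in the set $I$ witnessed by $\mathsf{seg}_I$; and the preservation of $\mathsf{seg}_I$ unfolds, via the action of the $\mathsf{Eq}$ functor from \cref{defn:equality-functor}, to a path equation involving $\ap{p}{\mathsf{refl}_\star}$, which collapses because $\ap{p}{\mathsf{refl}}$ is reflexivity. Since $I$ is a set, these checks are propositional and can be dispatched by routine path manipulations.

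With $p$ constructed, \cref{thm:main} tells us that $I$, being initial in $\C$, is section inductive; applied to $p$ this yields an algebra morphism $s : I \to \unit$ with $p \circ s = \id_I$. Reading this equation on underlying functions, for every $x : I$ we get $x = p(s(x)) = p(\star) = 0_I$, exhibiting $I$ as contractible with centre $0_I$. Hence $I \simeq \unit$. The one mildly technical point is verifying the $\mathsf{seg}_I$-preservation clause for $p$, whose exact shape requires unfolding \cref{defn:equality-functor}; as noted, this is propositional in the set $I$ and so poses no real obstacle.
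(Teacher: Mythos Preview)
Your proof is correct, but it takes a different route from the paper's. The paper never constructs an explicit algebra morphism into $I$; instead it observes that the category of interval-algebras (objects $(X,x,y,q)$ with $q : x =_X y$) is equivalent, via singleton contractibility of $\Sigma y : X.\, x = y$, to the category of pointed sets (objects $(X,x)$). Since the interval is initial in the former and $\unit$ is initial in the latter, and initial objects correspond under equivalences of categories, the underlying sets coincide up to equivalence.

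Your argument stays entirely inside the interval-algebra category: you build $p : \unit \to I$ by hand and then invoke section induction to exhibit $I$ as a retract of $\unit$, hence contractible. This avoids any talk of equivalences of categories or of $\unit$ being initial anywhere, at the cost of verifying the algebra-morphism conditions for $p$ --- which, as you note, are propositional once $I$ is a set, and the $\mathsf{seg}$-clause in fact reduces to $\mathsf{seg}_I = \mathsf{seg}_I$ after unfolding the $\Eq$-action with your chosen witnesses. The paper's approach is shorter and more structural; yours is closer to how one would actually produce the contraction, and it illustrates the section-induction principle rather than initiality directly.
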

\begin{proof}
  By \cref{thm:main}, the interval is the initial object in the
  category with objects
  $\Sigma X : \Set.\Sigma x : X.\Sigma y : X.x =_{X} y$, while the unit type is the initial object in the category with objects  $\Sigma X : \Set.X$.
  By singleton contractibility, the former is equivalent to the
  latter, and since initiality is a universal property,
  the two initial objects coincide up to equivalence. \proofDone
\end{proof}

\section{Conclusions and Further Work}\label{sec:conclusion}

We have developed a semantic framework for QIITs: QIITS give rise to a
category of algebras and the initial object of this category represent
the types and constructors of the QIIT. This generalises the usual functorial semantics of
inductive types to this much more general setting. So far we have
verified the appropriateness of this setting by means of examples.
In future work, we would like to explicitly relate the syntax of QIITs to the
corresponding semantics.

Our category of algebras is complete. This is helpful when developing a metatheory of QIITs, as demonstrated by the  proof of equivalence of initiality and section induction (\cref{thm:main}), justifying elimination principles. Of course, completeness %of the category of algebras
is not by itself sufficient to derive the existence of initial algebras, but it suggests that it should be possible to restrict the argument functors enough to guarantee this, possibly by reducing the existence of QIITs to some basic type former playing an analogous role to that of W-types for ordinary inductive types.
We believe that completeness of the categories of algebras will allow an existence proof using the adjoint functor theorem.

We have restricted our attention to QIITs, but we believe that our construction is applicable to general HITs (and even HIITs). While at first glance such an extension of our framework seems to require an internal theory of $(\infty,1)$-categories, we believe that it is enough to keep track of only a very limited number of coherence conditions, making this extension possible even without solving the well-known problem of specifying an infinite tower of coherences in HoTT.

There are other directions of future work one may consider, e.g.\ the combination
of QIITS and induction-recursion, and the possibility of generalising
coinductive types along similar lines. In any case these generalisations should be driven by examples,
similar to how the examples discussed in the current paper have motivated the need for QIITs.

\subsubsection*{Acknowledgements.}
We thank Ambrus Kaposi and Jakob von Raumer for many interesting discussions.
%This research was supported by EPSRC grants EP/M016994/1 (TA, NK) and
%EP/K023837/1 (FNF), as well as AFOSR award FA9550-16-1-0029 (TA, PC).
This research was supported by EPSRC grants EP/M016994/1 and
EP/K023837/1, as well as AFOSR award FA9550-16-1-0029.

\bibliographystyle{plainnat}
\bibliography{qiits}

\begin{thebibliography}{30}
\providecommand{\natexlab}[1]{#1}
\providecommand{\url}[1]{\texttt{#1}}
\expandafter\ifx\csname urlstyle\endcsname\relax
  \providecommand{\doi}[1]{doi: #1}\else
  \providecommand{\doi}{doi: \begingroup \urlstyle{rm}\Url}\fi

\bibitem[Altenkirch and Kaposi(2016)]{Altenkirch2016}
Thorsten Altenkirch and Ambrus Kaposi.
\newblock Type theory in type theory using quotient inductive types.
\newblock In \emph{Principles of Programming Languages}. ACM, 2016.

\bibitem[Altenkirch et~al.(2017)Altenkirch, Danielsson, and
  Kraus]{alt-dan-kra:partiality}
Thorsten Altenkirch, Nils~Anders Danielsson, and Nicolai Kraus.
\newblock Partiality, revisited.
\newblock In Javier Esparza and Andrzej~S. Murawski, editors, \emph{Foundations
  of Software Science and Computation Structures}, pages 534--549. Springer,
  2017.

\bibitem[Angiuli et~al.(2014)Angiuli, Morehouse, Licata, and Harper]{gitHIT}
Carlo Angiuli, Edward Morehouse, Daniel~R. Licata, and Robert Harper.
\newblock Homotopical patch theory.
\newblock In \emph{International Conference on Functional Programming}, pages
  243--256, 2014.

\bibitem[Awodey and Warren(2009)]{awodeyWarren2009id}
Steve Awodey and Michael~A. Warren.
\newblock Homotopy theoretic models of identity types.
\newblock \emph{Mathematical Proceedings of the Cambridge Philosophical
  Society}, 146\penalty0 (1):\penalty0 45--55, 2009.

\bibitem[Basold et~al.(2016)Basold, Geuvers, and van~der Weide]{weide}
Henning Basold, Herman Geuvers, and Niels van~der Weide.
\newblock Higher inductive types in programming.
\newblock \emph{Journal of Universal Computer Science}, 23\penalty0
  (1):\penalty0 63--88, 2016.

\bibitem[Bertot and Cast\'eran(2004)]{coq}
Yves Bertot and Pierre Cast\'eran.
\newblock \emph{Interactive Theorem Proving and Program Development. Coq'Art:
  The Calculus of Inductive Constructions}.
\newblock Texts in Theoretical Computer Science. Springer, 2004.

\bibitem[Brady(2013)]{idris}
Edwin Brady.
\newblock Idris, a general-purpose dependently typed programming language:
  Design and implementation.
\newblock \emph{Journal of Functional Programming}, 23:\penalty0 552--593, 9
  2013.

\bibitem[Brunerie(2016)]{brunerieThesis}
Guillaume Brunerie.
\newblock \emph{On the homotopy groups of spheres in homotopy type theory}.
\newblock PhD thesis, Universit{\`e} de Nice, 2016.

\bibitem[Buchholtz and Rijke(2017)]{RPspaces}
Ulrik Buchholtz and Egbert Rijke.
\newblock The real projective spaces in homotopy type theory.
\newblock In \emph{Logic in Computer Science}, pages 1--8, 2017.

\bibitem[Cavallo(2015)]{cavalloThesis}
Evan Cavallo.
\newblock Synthetic cohomology in {Homotopy Type Theory}.
\newblock Master's thesis, Carnegie-Mellon University, 2015.

\bibitem[Chapman et~al.(2015)Chapman, Uustalu, and Veltri]{Chapman2015}
James Chapman, Tarmo Uustalu, and Niccol{\`o} Veltri.
\newblock Quotienting the delay monad by weak bisimilarity.
\newblock In Martin Leucker, Camilo Rueda, and Frank~D. Valencia, editors,
  \emph{International Colloquium on Theoretical Aspects of Computing}, volume
  9399 of \emph{LNCS}, pages 110--125. Springer, 2015.

\bibitem[de~Moura et~al.(2015)de~Moura, Kong, Avigad, van Doorn, and von
  Raumer]{moura:lean}
Leonardo de~Moura, Soonho Kong, Jeremy Avigad, Floris van Doorn, and Jakob von
  Raumer.
\newblock The {L}ean theorem prover.
\newblock In \emph{Conference on Automated Deduction}, 2015.

\bibitem[Diaconescu(1975)]{diaconescu:ac}
Radu Diaconescu.
\newblock Axiom of choice and complementation.
\newblock \emph{Proceedings of the American Mathematical Society}, 51\penalty0
  (1):\penalty0 176--178, 1975.

\bibitem[Dijkstra(2017)]{gabeThesis}
Gabe Dijkstra.
\newblock \emph{Quotient inductive-inductive types}.
\newblock PhD thesis, University of Nottingham, 2017.

\bibitem[Dybjer(1994)]{dybjer1994indfam}
Peter Dybjer.
\newblock Inductive families.
\newblock \emph{Formal aspects of computing}, 6\penalty0 (4):\penalty0
  440--465, 1994.

\bibitem[Dybjer and Moeneclaey(2017)]{dybjerfinitary}
Peter Dybjer and Hugo Moeneclaey.
\newblock Finitary higher inductive types in the groupoid model.
\newblock In Alexandra Silva, editor, \emph{Mathematical Foundations of
  Programming Semantics}, 2017.

\bibitem[Dybjer and Setzer(1999)]{dybjersetzer1999finax}
Peter Dybjer and Anton Setzer.
\newblock A finite axiomatization of inductive-recursive definitions.
\newblock In \emph{Typed lambda calculi and applications}, pages 129--146.
  Springer, 1999.

\bibitem[Hagino(1987)]{hagino}
Tatsuya Hagino.
\newblock \emph{A Categorical Programming Language}.
\newblock PhD thesis, University of Edinburgh, 1987.

\bibitem[Hofmann and Streicher(1998)]{groupoidModel}
Martin Hofmann and Thomas Streicher.
\newblock The groupoid interpretation of type theory.
\newblock In \emph{Twenty-five years of constructive type theory ({V}enice,
  1995)}, volume~36 of \emph{Oxford Logic Guides}, pages 83--111. Oxford
  University Press, 1998.

\bibitem[{Hou (Favonia)} et~al.(2016){Hou (Favonia)}, Finster, Licata, and
  Lumsdaine]{BlakersMassey}
Kuen-Bang {Hou (Favonia)}, Eric Finster, Daniel~R. Licata, and Peter~LeFanu
  Lumsdaine.
\newblock A mechanization of the {B}lakers-{M}assey connectivity theorem in
  {H}omotopy {T}ype {T}heory.
\newblock In \emph{Logic in Computer Science}, 2016.

\bibitem[Kapulkin and Lumsdaine(2016)]{kapulkinLumsdaine2016SSet}
Chris Kapulkin and Peter~LeFanu Lumsdaine.
\newblock The simplicial model of univalent foundations (after {V}oevodsky),
  2016.
\newblock \href{https://arxiv.org/abs/1211.2851}{arXiv:1211.2851}.

\bibitem[Licata and Finster(2014)]{EMspaces}
Daniel~R. Licata and Eric Finster.
\newblock {E}ilenberg-{M}aclane spaces in homotopy type theory.
\newblock In \emph{Logic in Computer Science}, pages 66:1--66:9, 2014.

\bibitem[Licata and Shulman(2013)]{pi1S1}
Daniel~R. Licata and Michael Shulman.
\newblock Calculating the fundamental group of the circle in homotopy type
  theory.
\newblock In \emph{Logic in Computer Science}, pages 223--232, 2013.

\bibitem[Lumsdaine and Shulman(2017)]{lumsdaine2017semantics}
Peter~LeFanu Lumsdaine and Mike Shulman.
\newblock Semantics of higher inductive types, 2017.
\newblock \href{https://arxiv.org/abs/1705.07088}{arXiv:1705.07088}.

\bibitem[Martin-L{\"o}f(1972)]{martinlof1972ITT}
Per Martin-L{\"o}f.
\newblock An intuitionistic theory of types.
\newblock Published in Twenty-Five Years of Constructive Type Theory, 1972.

\bibitem[Norell(2007)]{agda}
Ulf Norell.
\newblock \emph{Towards a practical programming language based on dependent
  type theory}.
\newblock PhD thesis, Chalmers University of Technology, 2007.

\bibitem[Shulman(2017)]{shulmanSynth}
Michael Shulman.
\newblock Homotopy type theory: the logic of space, 2017.
\newblock To appear in \emph{New Spaces for Mathematics and Physics}.
  \href{https://arxiv.org/abs/1703.03007}{arXiv:1703.03007}.

\bibitem[Sojakova(2015)]{sojakova2015hits}
Kristina Sojakova.
\newblock Higher inductive types as homotopy-initial algebras.
\newblock In Sriram~K. Rajamani and David Walker, editors, \emph{Principles of
  Programming Languages}, pages 31--42. {ACM}, 2015.

\bibitem[{Univalent Foundations Program}(2013)]{Univalent2013}
The {Univalent Foundations Program}.
\newblock \emph{Homotopy Type Theory: Univalent Foundations of Mathematics}.
\newblock \url{https://homotopytypetheory.org/book}, Institute for Advanced
  Study, 2013.

\bibitem[Voevodsky(2010)]{voevodsky2010Uni}
Vladimir Voevodsky.
\newblock The equivalence axiom and univalent models of type theory (talk at
  {CMU} on {F}ebruary 4, 2010), 2010.
\newblock \href{http://arxiv.org/abs/1402.5556}{arXiv:1402.5556}.

\end{thebibliography}

\end{document}